\documentclass[journal,10pt]{IEEEtran}
\usepackage[T1]{fontenc}
\usepackage{amsmath,amssymb,mathtools}
\usepackage{cite}
\usepackage{url}
\usepackage{graphicx}
\usepackage{placeins}
\usepackage{flafter}
\usepackage{array,multirow}
\usepackage{microtype}
\usepackage[hidelinks]{hyperref}
\allowdisplaybreaks
\newtheorem{theorem}{Theorem}
\newtheorem{lemma}{Lemma}
\newtheorem{corollary}{Corollary}
\newtheorem{proposition}{Proposition}
\newtheorem{remark}{Remark}
\newtheorem{definition}{Definition}
\newtheorem{example}{Example}
\newcommand{\cN}{\mathcal{N}}
\newcommand{\cB}{\mathcal{B}}
\newcommand{\cG}{\mathcal{G}}
\newcommand{\cH}{\mathcal{H}}
\newcommand{\cS}{\mathcal{S}}
\newcommand{\cD}{\mathcal{D}}
\newcommand{\bd}{\mathbf{d}}
\newcommand{\abs}[1]{\left|#1\right|}
\newcommand{\given}{\,|\,}
\hypersetup{pdftitle={Secure D2D Coded Caching under Radius-Limited Connectivity via Lifting},pdfauthor={Han Fang, Nan Liu, Wei Kang}}
\begin{document}
\title{Secure D2D Coded Caching under Radius-Limited Connectivity via Lifting}
\author{Han~Fang, Nan~Liu, and Wei~Kang%
\thanks{Han Fang is with the School of Cyber Science and Engineering,
Southeast University, Nanjing, China (e-mail: h\_fang@seu.edu.cn).}%
\thanks{Nan Liu is with the National Mobile Communications Research Laboratory
and the School of Cyber Science and Engineering, Southeast University,
Nanjing, China (e-mail: nanliu@seu.edu.cn).}%
\thanks{Wei Kang is with the School of Information Science and Engineering
and the School of Cyber Science and Engineering, Southeast University,
Nanjing, China (e-mail: wkang@seu.edu.cn).}}
\maketitle
\begin{abstract}
Secure device-to-device (D2D) caching requires users to generate delivery
signals from their own caches and to recover their requested files without
learning the other files. We construct schemes that meet both requirements
when communication is limited to nearby users. The network consists of
\(K\) users on a cycle, a library of \(N\) independent files, and a reception
radius \(r\). Delivery uses one round of noiseless orthogonal local broadcasts,
with no server transmission. Each user's cache has capacity \(M\) file units.
An outer secret-sharing code assigns file shares to senders, and a secure
local caching scheme at each sender delivers the requested shares. The cache
budget includes the sender's library representation, random state, and
delivery keys. We prove that each receiver's cache and all audible broadcasts
jointly reveal no information about nonrequested files, despite the
dependence between the outer shares. The converse counts transmissions
crossing a contiguous user block and uses
secrecy to remove one user's observation from the entropy bound. Let
\(h=2r\) and \(K\ge h+1\). The private coded caching (PCC) instantiation
attains the minimum peak sum load \(K/h\) for every \(M\ge Nh\) when
\(N>1\). For \(N\ge K\), its memory-sharing envelope is within a factor
\(15\) of the optimum throughout \(M\in[N/h+2,Nh]\).
We also apply lifting without secrecy constraints, achieving the optimal
load on \(M\in[hN/(h+1),N]\) and a factor-\(9/2\) guarantee on
\(M\in[N/(h+1),N]\) when \(N\ge K\).
\end{abstract}

\begin{IEEEkeywords}
Secure coded caching, device-to-device networks, file-content secrecy,
radius-limited connectivity, ramp secret sharing.
\end{IEEEkeywords}

\section{Introduction}

In secure D2D caching, a user must recover its requested file from nearby
devices without learning the other files from its cache and received
signals. Every device must also be able to generate its transmission from
the content and keys placed in its cache before the requests arrive. We
study these requirements together on a cycle, where each transmission reaches
the \(h=2r\) neighbors within radius \(r\). Our construction places a secure
shared-link caching instance at each sender. We include the storage needed
to run these instances and prove secrecy for each receiver's cache and all
broadcasts it hears.

D2D communication supports short-range content delivery through nearby
devices \cite{DopplerRinneWijtingRibeiroHugl2009,AsadiWangMancuso2014,
TehraniUysalYanikomeroglu2014}. Placing popular content at these devices
reduces the traffic carried by cellular infrastructure
\cite{GolrezaeiMolischDimakisCaire2013,GolrezaeiDimakisMolisch2014}.
Coded caching creates multicast transmissions that serve several requests at
once. The Maddah-Ali--Niesen (MAN) construction achieves this gain by
coordinating cache placement and delivery over a shared link
\cite{MaddahAliNiesen2014}. Its server holds the entire library. In the D2D
model studied here, the server is silent during delivery, and each user has
a finite cache and can reach only nearby users. We minimize the worst-case
sum of all user transmission entropies, normalized by file size.

The construction has two layers. An outer secret-sharing code assigns a
share of every file to each sender. A single sender's share library reveals
no file content, and a receiver can reconstruct its requested file from
its home share and the \(h\) requested shares delivered by its neighbors.
Each sender uses a secure local caching scheme to deliver those shares.
The shares at different senders are correlated because they encode the same
files. We handle this dependence by showing that, conditioned on the
requested share, a local receiver's view has a distribution independent of
the other local files. Independent random seeds at the senders then give
a joint factorization that proves secrecy for the entire received view.

For a contiguous block of users, the converse counts the external
transmitters whose broadcasts reach the block. Conditioned on an anchor
user's requested file, its cache and audible broadcasts reveal no information
about the remaining files. We therefore remove this observation from the
entropy bound and bound the load using the remaining transmissions.

\subsection{Contributions and Main Results}

Let \(N\) be the number of files, \(K\) the number of users, and
\(h=2r\) the number of neighbors heard by each user, with \(K\ge h+1\).
Write \(M\) for each user's cache capacity and \(R\) for the peak sum load,
both normalized by file size. The main results are as follows.

\emph{Secure lifting and joint secrecy.}
A local scheme has receiver memory \(M_{\rm rx}\), sender memory
\(M_{\rm snd}\), and broadcast load \(R_{\rm loc}\), each normalized by
the local file size. It gives the D2D point
\[
M=M_{\rm rx}+\frac{M_{\rm snd}}{h},
\qquad R=\frac{K}{h}R_{\rm loc}.
\]
The sender memory includes the recoverable local library representation,
random state, and delivery keys. Theorem~\ref{thm:tifs-secure-lift} proves
this map for local schemes satisfying Definition~\ref{def:tifs-composable-kernel}.
Lemma~\ref{lem:tifs-multi-sender} proves joint file secrecy across the
correlated shares and all audible senders.

\emph{A radius-dependent lower bound.}
Theorem~\ref{thm:tifs-secure-converse} bounds the load using contiguous user
blocks. For a block of \(s\in\{1,\ldots,K-1\}\) users, let \(\delta_s\)
be the number of external senders audible to the block. Let \(\eta_s\)
be the largest number of these senders heard by one user in the block.
We choose a user attaining this maximum as the anchor. The two counts are
\(\delta_s=\min\{h,K-s\}\) and
\(\eta_s=\min\{K-s,\max\{h/2,h+1-s\}\}\).
The converse subtracts the external broadcasts already included in the
anchor's observation.

\emph{Exact high-memory load and a factor-\(15\) guarantee.}
Let \(R_{\rm sec}^{\star}(M,r)\) denote the optimal peak sum load allowed
by reliability and internal file secrecy at memory \(M\) and radius \(r\).
The private coded caching (PCC) scheme of
\cite{RavindrakumarPandaKaramchandaniPrabhakaran2018} gives the secure
family in Theorem~\ref{thm:tifs-pcc-lift}.
The PCC-lift family achieves
\[
R_{\rm sec}^{\star}(M,r)=K/h,
\qquad M\ge Nh,
\]
for \(N>1\). Write \(R_{{\rm PCC}\to r}^{\rm up}(M)\) for the lowest load
obtained by memory sharing among the PCC-lift points, represented by their
lower convex envelope. For \(N\ge K\), this achievable load satisfies
\[
R_{{\rm PCC}\to r}^{\rm up}(M)
\le 15R_{\rm sec}^{\star}(M,r),
\qquad M\in[N/h+2,Nh].
\]
The factor is independent of the library size, number of users, and radius.

A maximum-distance-separable (MDS) outer code also lifts shared-link schemes
to the non-secure model. For the MAN scheme, the lifted load is optimal on
\(M\in[hN/(h+1),N]\). When \(N\ge K\), memory sharing is within a factor
\(9/2\) of the optimum on \(M\in[N/(h+1),N]\).
Section~\ref{sec:nonsecure-counterpart} gives the construction and proofs.

\subsection{Related Work}

Shared-link caching supplies the local schemes used in our construction.
Decentralized placement achieves an order-optimal tradeoff
\cite{MaddahAliNiesen2015Decentralized}, and the optimal rate under uncoded
prefetching is known \cite{YuMaddahAliAvestimehr2018}.
Coded placement improves achievable rates in several memory regimes
\cite{ChenFanLetaief2016,AmiriGunduz2017,GomezVilardebo2018,TianChen2018,
ZhangTian2018}. Placement delivery arrays (PDAs), hypergraphs, and linear block
codes provide constructions that control the number of packets per file
(subpacketization)
\cite{YanChengTangChen2017PDA,ShangguanZhangGe2018,TangRamamoorthy2018,
AravindSarvepalliThangaraj2022}. Our lifting theorem specifies the local
secrecy and sender-storage conditions needed to use a shared-link scheme
at a physical D2D sender.

Ji, Caire, and Molisch studied coded delivery in fully connected D2D
networks \cite{JiCaireMolisch2016} and the throughput and system performance
of D2D caching \cite{JiCaireMolisch2016JSAC}.
Their framework also considers finite transmission distance and spatial reuse
\cite{JiCaireMolisch2016}.
Subsequent results characterize uncoded placement with one-shot delivery
\cite{YaparWanSchaeferCaire2019}, handle unequal cache capacities
\cite{IbrahimZewailYener2020}, and reduce packetization
\cite{WangChengYanTang2019D2DPDA,WoolseyChenJi2020}.
Here the fixed reception radius determines the shares available for decoding,
the joint observation subject to internal file secrecy, and the
transmissions counted by the block-boundary converse.

Multi-access caching studies users that read from several helper caches
\cite{HachemKaramchandaniDiggavi2017,ReddyKaramchandani2020,
ChengWanLiangZhangCaire2021,ChengWanEliaCaire2025}.
D2D-assisted schemes have been developed for cyclic and combinatorial
access topologies and for linear subpacketization
\cite{WuChengChenWuChen2024,UmmerRajan2025,UmmerRajan2026,
WangChengWuWang2026}. Remark~\ref{rem:tifs-model-boundary} distinguishes
helper-cache access from D2D reception.

Secure caching has considered both external listeners
\cite{SenguptaTandonClancy2015,AwanSezgin2015,AwanMathar2018}
and legitimate users who should learn only their requested files.
For the latter requirement, shared-link constructions are available in
\cite{RavindrakumarPandaKaramchandaniPrabhakaran2018}, and fully connected
D2D constructions in \cite{ZewailYener2016Asilomar,ZewailYener2020}.
PDA-based and shared-cache constructions provide further designs
\cite{MeelRajan2021PDA,MeelRajan2021Shared}.
Related security topics include demand privacy against colluding users
\cite{WanSunJiTuninettiCaire2022} and proactive D2D cache refresh under cache
attacks \cite{ZhangLinCheng2026}.
Our lifting theorem addresses internal file secrecy with finite user
caches and radius-limited broadcasts.

\begin{remark}[Cache access and D2D reception]
\label{rem:tifs-model-boundary}
A helper-cache access radius specifies which storage nodes a user can read.
A D2D reception radius specifies which user broadcasts it receives.
Here each user has one cache and uses its contents for transmission and
decoding.
\end{remark}

\subsection{Full-Connectivity Specialization}

When \(K=h+1\), every user hears all other users. Index the secure PCC-lift
points by an integer \(u\in\{1,\ldots,K-2\}\). Their coordinates reduce to
\[
(M,R)=\left(\frac{Nu}{K-u}+1+\frac1u,\frac Ku\right),
\]
together with
\[
(M,R)=\left(N(K-1),\frac{K}{K-1}\right).
\]
These are the centralized-placement secure D2D points of Zewail and Yener
\cite{ZewailYener2020}. The non-secure MAN lift reduces to the D2D points
of Ji, Caire, and Molisch \cite{JiCaireMolisch2016}. Theorems
\ref{thm:tifs-secure-lift} and~\ref{thm:tifs-ns-lift} give the
corresponding constructions for every \(K\ge h+1\).

\section{System Model}
\label{sec:system-model}

\subsection{Secure Radius-Limited D2D Model}
For a positive integer \(m\), write \([m]=\{1,\ldots,m\}\).
For any indexed family \((U_i)\) and index set \(\mathcal A\), write
\(U_{\mathcal A}=(U_i:i\in\mathcal A)\).
There are \(K\) users on a cycle and \(N\) files, where \(K,N\) are
positive integers. Fix an integer reception radius \(r\ge1\), put
\(h:=2r\), and assume
\begin{equation}
K\ge h+1=2r+1.
\label{eq:tifs-standing-K}
\end{equation}
The files \(W_1,\ldots,W_N\) are mutually independent random variables.
Let \(F>0\) be their common entropy in bits. Each file is uniform over
a finite alphabet of size \(2^F\). User indices are taken modulo \(K\).
User \(k\) transmits to and
receives from the radius-\(r\) neighborhood
\[
    \cN_r(k):=\{k-r,\ldots,k+r\},
\]
which contains \(h+1\) users: user \(k\) and its \(h\) cyclic neighbors.
A server initializes the caches before delivery, and each user generates
its broadcast from its own cache.

The server is trusted, and each user is an individual honest-but-curious
observer that follows the prescribed encoding maps.  One communication instance
consists of placement, a public demand vector, and one non-interactive delivery
round over noiseless orthogonal local broadcasts.
Let \(M\ge0\) be the cache budget per user, normalized by file size.
Placement uses a random seed \(Q\) independent of \(W_{[N]}\).
The placement encoder maps \((W_{[N]},Q)\) to the random caches
\(Z_1,\ldots,Z_K\), with \(H(Z_k)\le MF\).
For each fixed public demand vector \(\bd=(d_1,\ldots,d_K)\in[N]^K\),
\(d_k\) is the index of the file requested by user \(k\).
Sender \(k\) uses a deterministic local delivery encoder \(\psi_k\)
to generate the broadcast random variable
\[
    X_k^{\bd}=\psi_k(Z_k,\bd).
\]
Let \(R\ge0\) be a bound on the peak sum load, normalized by file size.
The load constraint requires, for every demand,
\begin{equation}
    \sum_{k=1}^K H(X_k^\bd)\le RF.
\label{eq:tifs-sum-load}
\end{equation}
For user \(u\in[K]\), the complete local view is
\(\bigl(Z_u,\{X_k^{\bd}:k\in\cN_r(u)\},\bd\bigr)\).  Reliability requires
\[
H\!\left(W_{d_u}\given Z_u,\{X_k^{\bd}:k\in\cN_r(u)\},\bd\right)=0
\]
for every \(u\) and every demand.
Perfect internal file secrecy requires
\begin{equation}
I\!\left(W_{[N]\setminus\{d_u\}};
Z_u,\{X_k^{\bd}:k\in\cN_r(u)\}\given \bd\right)=0.
\label{eq:tifs-file-sec}
\end{equation}
Reliability and file independence give the equivalent requested-file-conditioned
form
\begin{equation}
I\!\left(W_{[N]\setminus\{d_u\}};
Z_u,\{X_k^{\bd}:k\in\cN_r(u)\}\given W_{d_u},\bd\right)=0.
\label{eq:tifs-file-sec-conditioned}
\end{equation}
Indeed, for \(A=W_{[N]\setminus\{d_u\}}\), \(B=W_{d_u}\), and the complete
observation \(V\), file independence gives \(I(A;B)=0\) and recovery gives
\(H(B\mid V,\bd)=0\).  The chain rule then yields
\(I(A;V\mid B,\bd)=I(A;V\mid\bd)\).

Figure~\ref{fig:tifs-system-model} illustrates the cyclic topology,
cache placement, and local D2D communication.
\begin{figure}[t]
\centering
\includegraphics[width=\columnwidth]{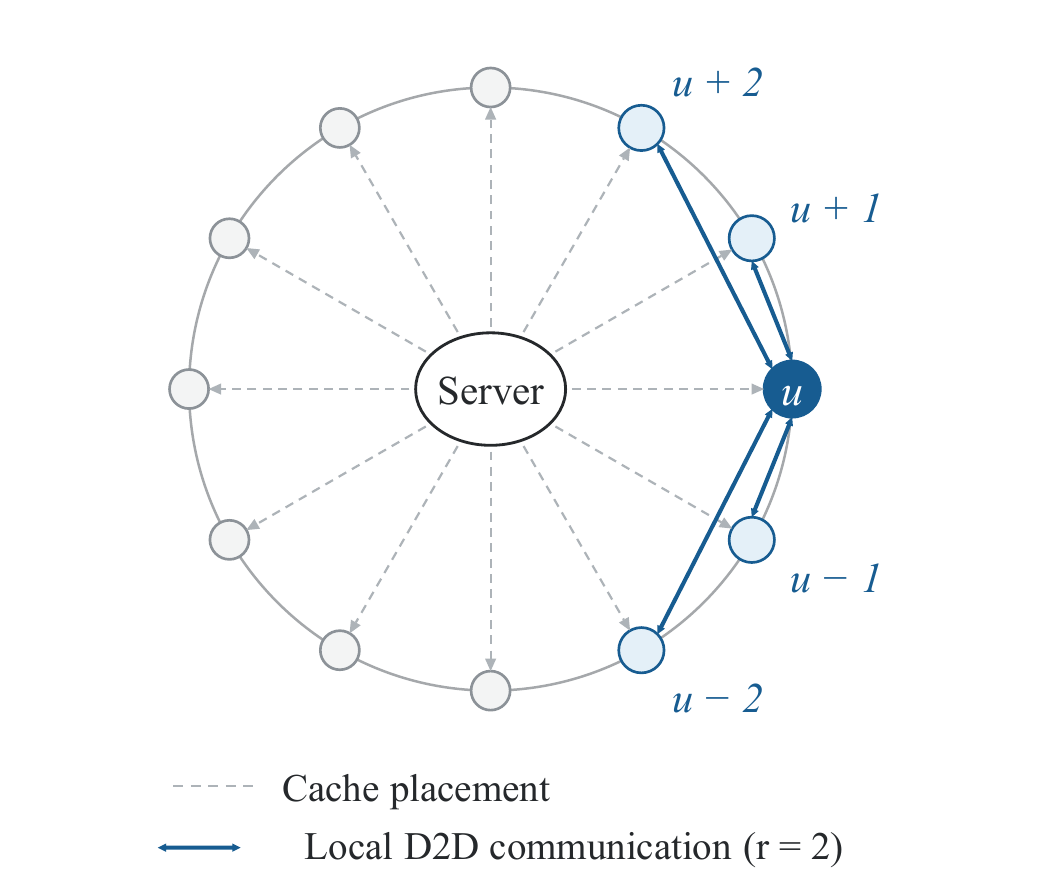}
\caption{Cyclic D2D topology, illustrated for \(r=2\). The central server
initializes the user caches before delivery. User \(u\) communicates with
its \(h=2r\) cyclic neighbors through local D2D broadcasts.}
\label{fig:tifs-system-model}
\end{figure}

For a fixed admissible file size \(F\), an \emph{exactly achievable} pair
\((M,R)\) is one for which placement and non-interactive delivery maps satisfy
the stated cache and peak-load bounds, zero-error reliability, and perfect file
secrecy for every demand.  The closure of these pairs over all admissible file
sizes \(F\) defines the secure memory--load region, with optimal load
\[
R_{\rm sec}^\star(M,r)
:=\inf\left\{R:\ (M,R)\text{ is in the secure region}\right\}.
\]
The region is convex by independent concatenation. Consider codes with
file sizes \(F_1,F_2\), and let \(a,b\) be positive integers.
Concatenating \(a\) independent copies of the first code and \(b\) of the
second preserves recovery and secrecy and gives the resource weight
\(aF_1/(aF_1+bF_2)\). For any target weight \(\lambda\in[0,1]\),
a sequence of choices of \(a,b\) makes this ratio converge to \(\lambda\).
Taking the closure gives memory sharing at every weight.
For a real number \(x\), define \([x]^+:=\max\{x,0\}\).

\begin{table*}[t]
\centering
\caption{Frequently used notation.}
\label{tab:tifs-notation}
\scriptsize
\setlength{\tabcolsep}{4pt}
\renewcommand{\arraystretch}{1.10}
\begin{tabular}{p{1.65cm}p{6.65cm}p{1.65cm}p{6.65cm}}
\hline
Symbol & Meaning & Symbol & Meaning \\
\hline
\(N,K\) & Number of files and cyclic users & \(r,h=2r\) & Reception radius and number of other users in a full radius window \\
\(W_n,F\) & File \(n\) and its entropy in bits & \(M,R\) & Normalized cache size and peak total sum load \\
\(Z_k,X_k^{\bd}\) & Cache and non-interactive broadcast of user \(k\) & \(\cN_r(k)\) & Users whose broadcasts are heard by user \(k\), including itself \\
\(\bd,\bd^{(k)}\) & Global demand and its projection to sender \(k\)'s local receivers & \(S_{n,k}\) & Outer one-private share of file \(n\) assigned to sender position \(k\) \\
\(C_u^{(k)},G_k\) & Receiver cache in local instance \(k\), and sender-side state & \(V_u^{(k)}\) & Local view \((C_u^{(k)},X_k)\) of receiver \(u\) in instance \(k\) \\
\(t,u=t+1\) & PCC parameter and re-indexed achievable-point parameter & \(q_s\) & \(\lfloor N/s\rfloor\), number of demand instances in the secure converse \\
\(\delta_s\) & Number of external boundary senders audible to an \(s\)-user block & \(\eta_s\) & Maximum boundary senders included in one anchor user's secure observation \\
\hline
\end{tabular}
\end{table*}
\begin{remark}[Global delivery secrecy of the constructed points]
The keyed PCC-lift family and the high-memory endpoint satisfy global delivery
secrecy,
\[
I(W_{[N]};X_1^\bd,\ldots,X_K^\bd\given\bd)=0.
\]
Independent keys mask the complete delivery vector at each keyed PCC-lift
point.  At the high-memory endpoint, the linear sharing construction uses
independent randomness for every sender--file pair.
Proposition~\ref{prop:tifs-endpoint-delivery-secrecy} proves secrecy of the joint
transcript across all senders.
\end{remark}

\subsection{Non-Secure Benchmark Model}
The non-secure radius-\(r\) D2D benchmark optimizes zero-error recovery under
the same cyclic topology, cache budget, local non-interactive delivery maps,
and peak total sum load.  Its achievable region is the closure of the
finite-file-size memory--load pairs satisfying these placement, delivery, and
reliability conditions, and
\(R^\star(M,r)\) denotes the infimum peak load at memory \(M\).

The converse proofs compare user blocks and different file requests.
The following symmetrization makes the corresponding entropy calculations
invariant under cyclic user shifts and file relabelings.

\begin{lemma}[Cyclic-user and file-index symmetrization]
\label{lem:tifs-symmetrization}
Under this single-shot non-interactive model, every secure or non-secure
scheme with memory--load pair \((M,R)\) can be converted, without changing
\((M,R)\), into a scheme invariant under all cyclic user shifts and all file
permutations.  Zero-error recovery is preserved, and in the secure case so is
perfect file secrecy.  Consequently, the optima
\(R_{\rm sec}^\star(M,r)\) and \(R^\star(M,r)\) are unchanged if the
optimization is restricted to schemes with both symmetries.
\end{lemma}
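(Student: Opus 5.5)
The plan is the standard group-averaging construction over the symmetry group of the model. Let \(\cG\) be the group generated by the cyclic user shifts \(k\mapsto k+c \pmod K\) and the file permutations \(\pi\in S_N\). Thus \(\cG\cong \mathbb{Z}_K\times S_N\) and \(|\cG|=K\cdot N!\). Given a scheme with file size \(F\), seed \(Q\), caches \(Z_k\) and delivery maps \(\psi_k\), I would build a new scheme with file size \(|\cG|F\).

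\textbf{Construction.} Split each file \(W_n\) into \(|\cG|\) independent, uniform subfiles \(W_n^{(g)}\), \(g\in\cG\), each of size \(F\). Draw independent copies \(Q^{(g)}\) of the seed. In copy \(g=(c,\pi)\), run the original scheme on the relabeled library \((W_{\pi^{-1}(1)}^{(g)},\ldots,W_{\pi^{-1}(N)}^{(g)})\). Physical user \(k\) plays the role of original user \(k+c\). Its new cache is \(\tilde Z_k=(Z^{(g)}_{k+c}:g\in\cG)\). For a demand \(\bd\), copy \(g\) uses the transformed demand \(\bd^{g}\) given by \(d^{g}_{k+c}=\pi(d_k)\), and user \(k\) broadcasts the collection of its copy-wise signals.

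\textbf{Memory and load.} Since each copy cache has entropy at most \(MF\), we get \(H(\tilde Z_k)\le |\cG|MF\). Subadditivity bounds the new sum load by \(\sum_g\sum_k H(X^{\bd^g,(g)}_{k+c})\). For each \(g\), the shift is a bijection of \([K]\) and \(\bd^g\) is a legitimate demand, so this is at most \(|\cG|RF\). Normalizing by \(|\cG|F\) preserves \((M,R)\).

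\textbf{Reliability.} The key topological fact is that a cyclic shift maps \(\cN_r(k)\) onto \(\cN_r(k+c)\). Hence in copy \(g\), user \(k\) hears exactly the broadcasts that original user \(k+c\) hears, and it recovers \(W^{(g)}_{d_k}\), the subfile relabeled as file \(\pi(d_k)=d^g_{k+c}\). Collecting over \(g\) recovers \(W_{d_k}\).

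\textbf{Invariance.} Acting by any \(h\in\cG\) merely permutes the copy index \(g\mapsto gh\), up to the corresponding relabeling of subfiles. Therefore every joint entropy of caches, broadcasts and files is invariant under cyclic shifts and file permutations, which is the symmetry the converse uses. The closure and infimum definitions then show that restricting to symmetric schemes does not change \(R^\star_{\rm sec}\) or \(R^\star\).

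\textbf{Secrecy.} I expect this to be the main obstacle. User \(u\)'s full view is \(V=(V^{(g)})_g\), where \(V^{(g)}\) is a function of \((W^{(g)}_{[N]},Q^{(g)})\). These pairs are mutually independent across \(g\). I would write
\[
I(W_{[N]\setminus\{d_u\}};V\given\bd)\le\sum_g I\bigl(W^{(g)}_{[N]\setminus\{d_u\}};V^{(g)}\given\bd\bigr),
\]
which follows from independence across copies: the joint distribution factorizes, so the chain rule splits the mutual information. Each summand equals the original scheme's secrecy leakage for user \(u+c\) under demand \(\bd^g\), with the nonrequested set relabeled by \(\pi\). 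That leakage is zero by \eqref{eq:tifs-file-sec}. I would check the factorization carefully, in particular that \(\bd\) is deterministic and public, and that the subfile split is itself independent and uniform.
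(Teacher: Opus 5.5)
Your proposal is correct and takes the same route as the paper. The paper also uses $K\cdot N!$ independent components indexed by a (shift, permutation) pair, each running the original code on relabeled users, files and demands. It likewise gets $(M,R)$ from subadditivity, recovery from the fact that shifts preserve radius windows, secrecy from componentwise factorization, and both symmetries from permuting the component index. Your secrecy inequality in fact holds with equality, since the pairs $(W^{(g)}_{[N]},V^{(g)})$ are independent across $g$; this is exactly the paper's statement that the components' ``conditional laws factor.''
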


\begin{IEEEproof}
Let the original code have file size \(F_0\), let \(\mathfrak S_N\) be the
file-permutation group, and set \(L=K|\mathfrak S_N|\).  Form new files of
size \(F'=LF_0\), each containing independent subfiles
\(W_n^{(a,\pi)}\) of size \(F_0\), indexed by
\((a,\pi)\in\{0,\ldots,K-1\}\times\mathfrak S_N\).
Here \(a\) is a cyclic shift and \(\pi\) is a file permutation.
Shift labels are added modulo \(K\).
Run an independent copy of
the original code in each component: physical user \(j+a\) plays baseline
role \(j\), and baseline file \(m\) is \(W_{\pi(m)}^{(a,\pi)}\).  Under
physical demand \(\bd\), baseline role \(j\) therefore receives demand
\[
    \widetilde d_j^{(a,\pi)}
    =\pi^{-1}(d_{j+a}).
\]
User \(u\) concatenates the caches and broadcasts assigned to role \(u-a\)
in each component.  Entropy subadditivity gives
\[
H(Z'_u)\le LMF_0=MF',\quad
\sum_u H(X_u'^\bd)\le LRF_0=RF'.
\]
Cyclic relabeling preserves each radius window, and every component broadcast
is a function of its sender's component cache and demand.

In component \((a,\pi)\), physical user \(u\) decodes baseline file
\(\pi^{-1}(d_u)\), namely the subfile \(W_{d_u}^{(a,\pi)}\).  Concatenating
over all components reconstructs \(W_{d_u}\).  In the secure case, the
original secrecy constraint applied to baseline role \(u-a\) makes that
component observation independent of all subfiles
\(W_n^{(a,\pi)}\) with \(n\ne d_u\).  Components use independent file
subblocks and independent placement randomness, so their conditional laws
factor, and the concatenated observation has zero leakage.  A cyclic user
shift by \(b\) permutes components by \((a,\pi)\mapsto(a+b,\pi)\), while a
file relabeling \(\sigma\) permutes them by
\((a,\pi)\mapsto(a,\sigma\circ\pi)\).  These relabelings, together with the
corresponding demand relabeling, give both symmetries.
\end{IEEEproof}

\section{Main Secure Results}
\label{sec:tifs-main-results}

Each physical sender runs one local caching instance. Its cache stores the
state used to serve that instance and the receiver caches from neighboring
instances.

\begin{definition}[Composable local kernel with a recoverable sender library]
\label{def:tifs-composable-kernel}
Consider \(h\) receivers and independent local files \(A_1,\ldots,A_N\), each
uniformly distributed over a finite alphabet of size \(2^B\), where \(B>0\).
A local kernel consists of receiver caches \(C_1,\ldots,C_h\), a sender state
\(G\), and a broadcast \(X^{\mathbf e}\) for each public demand
\(\mathbf e=(e_1,\ldots,e_h)\in[N]^h\).
Let \(\Phi\) and \(\gamma\) be deterministic placement and delivery maps.
Placement uses a random seed \(Q_{\rm loc}\) independent of the local files:
\[
 (C_1,\ldots,C_h,G)=\Phi(A_{[N]},Q_{\rm loc}),
 \qquad X^{\mathbf e}=\gamma(G,\mathbf e).
\]
Let \(M_{\rm rx},M_{\rm snd},R_{\rm loc}\ge0\) be fixed bounds on the
receiver memory, sender memory, and broadcast load, respectively, normalized
by the local file entropy \(B\).  For every \(j\in[h]\) and demand
\(\mathbf e\), the bounds require
\[
H(C_j)\le M_{\rm rx}B,\quad H(G)\le M_{\rm snd}B,\quad
H(X^{\mathbf e})\le R_{\rm loc}B.
\]
The sender state recovers the local library:
\begin{equation}
H(A_{[N]}\given G)=0.
\label{eq:tifs-library-recoverable}
\end{equation}
For every receiver \(j\in[h]\) and demand \(\mathbf e\), the local view
\(V_j=(C_j,X^{\mathbf e})\) satisfies
\begin{align}
H(A_{e_j}\given V_j,\mathbf e)&=0,
\label{eq:tifs-local-dec}\\
I(A_{[N]\setminus\{e_j\}};V_j\given A_{e_j},\mathbf e)&=0.
\label{eq:tifs-composable-local-sec}
\end{align}
For an instance centered at physical sender \(k\), fix a bijection
\(\iota_k:[h]\to\cN_r(k)\setminus\{k\}\).  Local receiver \(j\) is
physical user \(\iota_k(j)\).  The local demand is
\[
\mathbf e=\bd^{(k)}:=(d_{\iota_k(1)},\ldots,d_{\iota_k(h)}),
\qquad e_j=d_{\iota_k(j)}.
\]
The seeds of distinct sender instances are mutually independent and independent
of the files and the outer-sharing randomness.  The state \(G\) includes every
variable used by the sender during delivery: the local-library representation,
delivery keys, and any retained placement randomness.  All of this state is
charged to \(M_{\rm snd}\).
\end{definition}

\subsection{Secure Lifting and the PCC Family}

\begin{theorem}[Secure lifting with complete sender-state accounting]
\label{thm:tifs-secure-lift}
Under the model of Section~\ref{sec:system-model}, let \(K\ge h+1\) and
\(h=2r\).  A local kernel satisfying
Definition~\ref{def:tifs-composable-kernel}, with parameters
\((M_{\rm rx},M_{\rm snd},R_{\rm loc})\), gives a file-secure D2D scheme at
\begin{equation}
    M=M_{\rm rx}+\frac{M_{\rm snd}}{h},
    \qquad R=\frac{K}{h}R_{\rm loc}.
\label{eq:tifs-secure-lift}
\end{equation}
\end{theorem}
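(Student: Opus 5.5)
The plan is to build the D2D scheme in two layers: an outer one-private ramp secret-sharing code over the cycle, and an independent copy of the local kernel run at every sender on that sender's shares. I would then verify memory, load, decoding and secrecy in that order. Secrecy is the only nontrivial step.

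\emph{Outer code.} Set \(B=F/h\) and work over a finite field with \(2^B\) elements (the usual splitting of files into sub-symbols handles field size). Split \(W_n\) into \(h\) symbols and append one independent uniform key symbol \(T_n\), with the keys \(T_1,\ldots,T_N\) independent across files. Encode these \(h+1\) symbols with a \([K,h+1]\) MDS code, for instance a Vandermonde code, whose generator columns all have a nonzero coefficient on \(T_n\). This gives shares \(S_{n,1},\ldots,S_{n,K}\), each of entropy \(B\), with two properties:
\begin{itemize}
\item any \(h+1\) shares, in particular those indexed by a window \(\cN_r(u)\), recover \(W_n\);
\item each single share \(S_{n,k}\) is uniform and independent of \(W_n\), because \(T_n\) enters it with a nonzero coefficient.
\end{itemize}
At sender \(k\), I would take local files \(A_n=S_{n,k}\). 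These are independent and uniform across \(n\), because files and keys are independent. Instance \(k\) then runs \(\Phi\) with its own seed, producing \(G_k\) and receiver caches \(C^{(k)}_j\).

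\emph{Placement, memory and load.} User \(u\) stores \(G_u\) together with \(C^{(k)}_{\iota_k^{-1}(u)}\) for each of its \(h\) neighbors \(k\), since \(u\) is a local receiver in exactly those instances. Subadditivity gives
\[
H(Z_u)\le M_{\rm snd}B+hM_{\rm rx}B=\Bigl(M_{\rm rx}+\tfrac{M_{\rm snd}}{h}\Bigr)F.
\]
Sender \(k\) broadcasts \(X_k=\gamma(G_k,\bd^{(k)})\), so the sum load is at most \(KR_{\rm loc}B=\tfrac{K}{h}R_{\rm loc}F\).

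\emph{Decoding.} In instance \(k\) with \(k\ne u\), user \(u\) has local demand \(e_j=d_u\), so \eqref{eq:tifs-local-dec} yields \(S_{d_u,k}\). Library recoverability \eqref{eq:tifs-library-recoverable} applied to \(G_u\) yields the home share \(S_{d_u,u}\). User \(u\) therefore holds the \(h+1\) shares indexed by \(\cN_r(u)\) and recovers \(W_{d_u}\).

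\emph{Secrecy, which I expect to be the main obstacle.} The shares of one file at different senders are correlated, so local secrecy cannot simply be applied instance by instance. The observation of user \(u\) is a function of \(G_u\) (which determines \(X_u\)) and the local views \(V^{(k)}=(C^{(k)}_{\iota_k^{-1}(u)},X_k)\) for the neighbors \(k\). I would proceed in three steps.
\begin{enumerate}
\item Let \(\mathcal S\) denote all shares and \(\Theta\) all outer keys. Condition on \((W_{[N]},\Theta)\), equivalently on \(\mathcal S\). The seeds are mutually independent and independent of \(\mathcal S\), so the views factor given \(\mathcal S\). Hence
\[
P(G_u,V^{(k)}_{k}\mid \mathcal S)=P(G_u\mid S_{[N],u})\prod_k P(V^{(k)}\mid S_{[N],k}).
\]
\item By \eqref{eq:tifs-composable-local-sec}, and because \(V^{(k)}\) is generated from \(S_{[N],k}\) and an independent seed, each factor \(P(V^{(k)}\mid S_{[N],k})\) depends only on \(S_{d_u,k}\). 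The whole observation is thus conditionally independent of \(W_{[N]}\) given \(Y:=(S_{[N],u},\,S_{d_u,k}:k\in\cN_r(u))\). This gives the Markov chain \(W_{[N]\setminus\{d_u\}}\to Y\to\text{view}\).
\item By data processing, \(I(W_{[N]\setminus\{d_u\}};\text{view})\le I(W_{[N]\setminus\{d_u\}};Y)\). The shares of file \(d_u\) depend only on \((W_{d_u},T_{d_u})\), which is independent of the other files, so this reduces to
\[
I\bigl(W_{[N]\setminus\{d_u\}};S_{n,u}:n\ne d_u\bigr)=\sum_{n\ne d_u}I(W_n;S_{n,u})=0,
\]
by independence across files and one-privacy.
\end{enumerate}
This establishes \eqref{eq:tifs-file-sec}. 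The delicate points are turning the conditional-MI statement \eqref{eq:tifs-composable-local-sec} into the statement that the law of \(V^{(k)}\) given \(A_{[N]}\) depends only on \(A_{e_j}\), and checking that \(G_u\) is a function of \(S_{[N],u}\) and the seed of instance \(u\) alone.
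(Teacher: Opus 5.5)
Your proposal is correct and follows the paper's proof essentially step for step. It uses the same one-private outer sharing with a uniform constant-term pad, the same cache assignment \(Z_u=(G_u,\{C_u^{(k)}\})\) with subadditivity accounting, and the same secrecy mechanism: the view factorizes given all shares through the independent sender seeds, and each neighboring factor depends only on \(S_{d_u,k}\) because the local library has full support. The differences are cosmetic. You conclude \eqref{eq:tifs-file-sec} directly via the Markov chain \(W_{[N]\setminus\{d_u\}}\to Y\to\text{view}\) and data processing, whereas Lemma~\ref{lem:tifs-multi-sender} computes the law conditioned on \(W_{d_u}\) explicitly. The paper also handles kernel alphabets whose size is not a prime power by sharing over a product of fields with \(m\) kernel copies (Remark~\ref{rem:tifs-finite-alphabet}), which your parenthetical on field size only gestures at.
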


Each file is shared across sender positions, with share entropy \(F/h\).
A user obtains one share from its own sender state and \(h\) shares from its
neighboring instances.  These \(h+1\) shares recover its requested file.
Section~\ref{sec:tifs-secure-lifting-framework} proves that its full observation
across the correlated instances reveals no information about nonrequested files.

\begin{theorem}[PCC-lift secure D2D family]
\label{thm:tifs-pcc-lift}
Under the model of Section~\ref{sec:system-model}, assume \(K\ge h+1\).
For \(t=0,\ldots,h-2\), the secure radius-\(r\) D2D network achieves
\begin{align}
M_t^{{\rm PCC}\to r}
&=\frac{N(t+1)}{h-t}+1+\frac{1}{t+1}, \nonumber\\
R_t^{{\rm PCC}\to r}&=\frac{K}{t+1}.
\label{eq:tifs-pcc-points}
\end{align}
It also achieves the high-memory endpoint
\begin{equation}
    (M,R)=\left(Nh,\frac{K}{h}\right).
\label{eq:tifs-pcc-endpoint}
\end{equation}
\end{theorem}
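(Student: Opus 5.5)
The plan is to invoke Theorem~\ref{thm:tifs-secure-lift} twice: once with the private coded caching (PCC) kernel of \cite{RavindrakumarPandaKaramchandaniPrabhakaran2018} with parameter \(t\), and once with a trivial kernel for the endpoint. In each case the work is to check Definition~\ref{def:tifs-composable-kernel}, with the full sender state counted, and then to substitute into \eqref{eq:tifs-secure-lift}.

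For the PCC points, take \(h\) local receivers and \(N\) local files of size \(B\). Fix \(t\in\{0,\ldots,h-2\}\). Following the PCC construction, encode each file with a \((\binom{h-1}{t},\binom{h}{t})\) non-perfect (ramp) secret-sharing code. This produces \(\binom{h}{t}\) shares indexed by \(t\)-subsets \(\mathcal T\subseteq[h]\). Each share has size \(B/\bigl(\binom{h}{t}-\binom{h-1}{t}\bigr)=B/\binom{h-1}{t-1}\) when \(t\ge1\). For \(t=0\) the single share is the file masked by a one-time pad.

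Receiver \(j\) caches the shares with \(j\in\mathcal T\), together with one unique key for each \((t+1)\)-subset containing \(j\). The delivery for subset \(\mathcal S\) is \(\bigoplus_{j\in\mathcal S}\) (share of \(A_{e_j}\) indexed by \(\mathcal S\setminus\{j\}\)), XORed with the key of \(\mathcal S\). Directly from the PCC paper:
\begin{itemize}
\item receiver memory is \(M_{\rm rx}=N\binom{h-1}{t-1}/\binom{h-1}{t-1}\cdot\frac{t}{h-t}+\cdots\);
\item load is \(R_{\rm loc}=\binom{h}{t+1}/\binom{h-1}{t-1}=h/(t+1)\);
\item receiver memory reduces to \(M_{\rm rx}=\frac{Nt}{h-t}+\frac{t+1}{h}\cdot\)(key term), which I will recompute explicitly.
\end{itemize}

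The sender state \(G\) consists of the library (\(N\) units, which is recoverable), the ramp randomness, and all \(\binom{h}{t+1}\) keys. One point needs care. Storing the full library plus the randomness costs \(N\cdot\binom{h}{t}/\binom{h-1}{t-1}\), which is the share library, and it also satisfies \eqref{eq:tifs-library-recoverable} because the shares determine the file. So I set \(M_{\rm snd}=\frac{Nh}{t(h-t)}\cdot t+\frac{h}{t+1}\)-type terms.

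I then check that \(M_{\rm rx}+M_{\rm snd}/h\) simplifies to \(\frac{N(t+1)}{h-t}+1+\frac1{t+1}\). I expect the receiver share part \(\frac{Nt}{h-t}\) plus the sender part \(\frac{N}{h-t}\) to give \(\frac{N(t+1)}{h-t}\). I expect the keys to give \(1\) at the receiver (normalized) and \(\frac{1}{t+1}\) at the sender, since \(\binom{h}{t+1}/\binom{h-1}{t}\cdot\frac1h\) simplifies to that. The lift gives \(R=\frac{K}{h}\cdot\frac{h}{t+1}=\frac{K}{t+1}\).

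The main obstacle is verifying the requested-file-conditioned secrecy \eqref{eq:tifs-composable-local-sec} rather than only the unconditional PCC secrecy. Conditioned on \(A_{e_j}\), receiver \(j\)'s view consists of:
\begin{itemize}
\item \(\binom{h-1}{t-1}\) shares of each other file, which are independent of that file by the ramp privacy threshold;
\item the broadcasts on subsets \(\mathcal S\ni j\), from which \(j\) removes its own key and sees only known shares together with the shares of \(A_{e_j}\);
\item the broadcasts on subsets \(\mathcal S\not\ni j\), which are masked by keys that \(j\) does not hold, and are therefore uniform and independent.
\end{itemize}
I will verify this by exhibiting the conditional law as a product of uniform pieces that do not depend on \(A_{[N]\setminus\{e_j\}}\). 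The equivalence argument below \eqref{eq:tifs-file-sec-conditioned} also transfers unconditional secrecy to the conditioned form, using \eqref{eq:tifs-local-dec}.

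For the endpoint, use the kernel in which \(G=A_{[N]}\), with no receiver caches and no keys. Placement is \(C_j=\emptyset\), so \(M_{\rm rx}=0\). This gives \(M_{\rm snd}=N\) and \(M=N/h\), which falls short of \(Nh\). I therefore expect the endpoint to instead come from \(C_j=\)(all other-library content) plus a one-time-pad key: each receiver caches \(A_n\oplus\kappa_n\) for all \(n\), and the sender sends \(\kappa_{e}\) for the requested index. Since demands differ across receivers, the sender unicasts the keys \(\kappa^{(j)}_{e_j}\), with keys distinct per receiver. Then \(R_{\rm loc}=h\), which gives \(R=K\) and is too large.

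The right kernel for the endpoint therefore seems to be the one in which receivers cache \(N\) units of masked content, so that \(M_{\rm rx}=N\), and the sender transmits a single broadcast of length \(B\) carrying each receiver's key portion via a linear one-time construction, as in the linear sharing construction of Proposition~\ref{prop:tifs-endpoint-delivery-secrecy}. I will fit its parameters so that \(M_{\rm rx}+M_{\rm snd}/h=Nh\) and \(R_{\rm loc}=1\), and verify secrecy by the independent per-sender–file randomness.
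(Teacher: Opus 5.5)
There is a genuine gap at the high-memory endpoint \((Nh,K/h)\): you never exhibit a kernel for it. Your final guess \(M_{\rm rx}=N\) would force \(M_{\rm snd}=Nh(h-1)\) in \eqref{eq:tifs-secure-lift}. No scheme you describe has those parameters, and ``fitting parameters'' is not a construction. Citing Proposition~\ref{prop:tifs-endpoint-delivery-secrecy} is circular, since that result analyzes exactly the construction you still have to supply.

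The missing idea is a linear split of each local share into \(h\) shares, any \(h-1\) of which reveal nothing. Draw independent uniform pads \(U_{n,k,1},\ldots,U_{n,k,h-1}\), set \(T_{n,k,i}=U_{n,k,i}\) for \(i<h\), and set \(T_{n,k,h}=S_{n,k}+a\sum_{i<h}U_{n,k,i}\) with \(a\neq 0\). Then assign the parameters as follows:
\begin{itemize}
\item[] Local receiver \(j\) caches \(T_{n,k,i}\) for every \(n\) and every \(i\neq j\), so \(M_{\rm rx}=N(h-1)\).
\item[] The sender stores all \(Nh\) shares, so \(M_{\rm snd}=Nh\).
\item[] The sender broadcasts the single \(B\)-bit sum \(X_k=\sum_{j=1}^{h}T_{e_j,k,j}\), so \(R_{\rm loc}=1\).
\end{itemize}
Receiver \(j\) cancels every summand with index \(i\neq j\), all of which are cached, and obtains \(T_{e_j,k,j}\) and hence \(S_{e_j,k}\). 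This works for every demand, including repeated requests.

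For secrecy, \(X_k\) is a deterministic function of \((\widetilde C_j^{(k)},S_{e_j,k},\mathbf e)\), because the \(h-1\) cached shares of file \(e_j\) together with \(S_{e_j,k}\) determine \(T_{e_j,k,j}\). So \eqref{eq:tifs-composable-local-sec} reduces to privacy of the cache alone. That holds because \(h-1\) shares of each nonrequested file form an unauthorized set, and the sharings are independent across files. Substituting into \eqref{eq:tifs-secure-lift} gives \((N(h-1)+N,K/h)=(Nh,K/h)\).

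For the PCC points your route is the paper's: ramp shares on \(t\)-subsets, one key per \((t+1)\)-subset, a sender state equal to the full ramp-coded library plus all keys, and the same three-group secrecy check. However, your ramp code is misparameterized.
\begin{itemize}
\item[] The privacy threshold must equal the number of shares a receiver caches, \(z_t=\binom{h-1}{t-1}\). This gives share size \(B/\bigl(\binom{h}{t}-\binom{h-1}{t-1}\bigr)=B/\binom{h-1}{t}\).
\item[] Your threshold \(\binom{h-1}{t}\) with share size \(B/\binom{h-1}{t-1}\) would have each receiver cache \(B\) bits of every file.
\item[] Your load would be \(\binom{h}{t+1}/\binom{h-1}{t-1}=h(h-t)/(t(t+1))\), not \(h/(t+1)\).
\item[] For \(t>h/2\), the \(\binom{h-1}{t-1}\) cached shares would exceed your threshold, so cache privacy would fail.
\end{itemize}
Your target expressions come out right only because the values you finally set (\(M_{\rm snd}=Nh/(h-t)+h/(t+1)\), and key contributions \(1\) and \(1/(t+1)\)) implicitly use the correct denominator \(\binom{h-1}{t}\).

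At \(t=0\) the single share is the file itself (\(z_0=0\)), and the one-time pad belongs in the keyed transmission. A padded single share could not by itself reconstruct the file.
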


The memory in \eqref{eq:tifs-pcc-points} includes the PCC receiver placement,
the sender's full local sharing representation, and all delivery keys.
The factor \(K/h\) in the load follows from \(K\) local broadcasts, each
normalized by the share entropy \(F/h\).

\subsection{Converse and Optimality Guarantees}

\begin{theorem}[Radius-aware block-boundary converse]
\label{thm:tifs-secure-converse}
Under the model of Section~\ref{sec:system-model}, assume
\(N\ge K\ge h+1\).  For a block of \(s\in\{1,\ldots,K-1\}\)
consecutive users, let \(q_s\) be the number of demand instances used to
recover \(sq_s\) distinct files.  Let \(\delta_s\) count the external
senders audible to the block, and let \(\eta_s\) be the largest number of
those senders heard by one user in the block.  These quantities are
\[
    q_s:=\left\lfloor\frac Ns\right\rfloor,\qquad
    \delta_s:=\min\{h,K-s\},
\]
\[
    \eta_s:=\min\{K-s,\max\{h/2,h+1-s\}\}.
\]
For every \(s\in[K-1]\) satisfying \(q_s\delta_s>\eta_s\),
\begin{equation}
R_{\rm sec}^\star(M,r)
\ge
\frac{K\left[(sq_s-1)-(s-1)M\right]^+}
     {q_s\delta_s-\eta_s}.
\label{eq:tifs-secure-converse}
\end{equation}
When \(q_s\delta_s=\eta_s\), every feasible scheme satisfies
\((sq_s-1)\le(s-1)M\).
\end{theorem}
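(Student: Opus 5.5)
We prove the bound with a cut-set style argument over a block of consecutive users. The argument is the usual multi-demand converse, modified in two ways. The external senders audible to the block replace the server. Secrecy lets us discard one anchor user's observation at no cost.

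\textbf{Setup.} By Lemma~\ref{lem:tifs-symmetrization} we may assume the scheme is invariant under cyclic shifts and file permutations, with file size \(F\). Fix a block \(\cB=\{1,\ldots,s\}\) of consecutive users. Its external audible senders form a set \(\cD\) with \(|\cD|=\delta_s=\min\{h,K-s\}\): up to \(r\) users on each side, capped by the \(K-s\) users outside the block. Choose an anchor \(a\in\cB\) that hears the largest number \(\eta_s\) of senders in \(\cD\); a direct window count gives the stated formula for \(\eta_s\).

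Take \(q_s\) demand vectors \(\bd^{1},\ldots,\bd^{q_s}\). In \(\bd^{i}\), the block users request the \(s\) files \(W_{(i-1)s+1},\ldots,W_{is}\), so the \(q_s\) instances cover \(sq_s\) distinct files. Outside users request arbitrarily. For each \(i\), the block's caches \(Z_{\cB}\) together with the external broadcasts \(X_{\cD}^{\bd^i}\) determine every broadcast the block hears. Internal broadcasts are functions of \(Z_{\cB}\), so reliability gives \(W_{(i-1)s+1:is}\) as a function of \((Z_{\cB},X_{\cD}^{\bd^i})\).

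\textbf{Entropy chain.} Start from
\[
sq_sF\le H\bigl(Z_{\cB},X_{\cD}^{\bd^1},\ldots,X_{\cD}^{\bd^{q_s}}\bigr).
\]
Split the right side into the anchor's observation in instance 1 and everything else. Let \(O_a=(Z_a,X_{\cD\cap\cN_r(a)}^{\bd^1})\). This is part of user \(a\)'s view, which by \eqref{eq:tifs-file-sec} is independent of all files except \(W_{d_a}\). Hence \(H(O_a\mid W_{d_a})=H(O_a\mid W_{[N]})\), and so \(I(O_a;W_{[N]})\le F\).

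Write the total file information as
\[
I\bigl(W_{[N]};O_a\bigr)+I\bigl(W_{[N]};\text{rest}\mid O_a\bigr).
\]
The first term is at most \(F\). The second is at most the entropy of the remaining pieces:
\begin{itemize}
\item the other \(s-1\) caches, contributing at most \((s-1)MF\);
\item the \(\delta_s-\eta_s\) external broadcasts of instance 1 not heard by \(a\);
\item the \(\delta_s\) external broadcasts of each of the other \(q_s-1\) instances.
\end{itemize}
This yields
\[
(sq_s-1)F\le (s-1)MF+\sum_{\text{these}}H(X_k^{\bd^i}).
\]

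\textbf{Relating to the sum load.} The right side contains \(q_s\delta_s-\eta_s\) single-sender entropies. By the cyclic symmetrization, each individual broadcast entropy equals \(1/K\) of the total. We also choose demands that are symmetric images of one another, so the total is at most \(RF\) for every demand. Each term is therefore at most \(RF/K\), which gives
\[
(sq_s-1)-(s-1)M\le\frac{(q_s\delta_s-\eta_s)R}{K}.
\]
Rearranging gives \eqref{eq:tifs-secure-converse}. When \(q_s\delta_s=\eta_s\), the same inequality reads \((sq_s-1)\le(s-1)M\).

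\textbf{Main obstacle.} The delicate step is equalizing the per-sender entropies. Symmetrization averages over shifts and file permutations, but the demand vectors \(\bd^i\) are not shift-invariant. The fix is to apply the averaging to the composite scheme of Lemma~\ref{lem:tifs-symmetrization}. In that scheme, \(H(X_k^{\bd})\) depends only on the orbit of \((k,\bd)\), and the sum over \(k\) of these orbit-averaged entropies is bounded by \(RF\).

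A second point to check is the secrecy step. The anchor's observation must use only instance-1 broadcasts together with \(Z_a\), so that \eqref{eq:tifs-file-sec-conditioned} applies to a single demand. This is why the anchor is placed in one instance only.
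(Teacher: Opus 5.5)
\textbf{Gap: equalizing the per-sender entropies.} Your entropy chain is sound. The bound $I(W_{[N]};O_a)\le F$ is a valid rephrasing of the paper's step of conditioning on $W^\star$, provided you invoke the conditioned form \eqref{eq:tifs-file-sec-conditioned} (equivalent under reliability) rather than the unconditioned form. The step that fails as written is ``each individual broadcast entropy equals $1/K$ of the total.'' Write $g_b\bd$ for the demand shifted by $b$ positions, $(g_b\bd)_{j+b}=d_j$. After symmetrization the only relations you have are
\[
H(X_k^{\bd})=H(X_{k+b}^{g_b\bd})\quad\text{and}\quad H(X_k^{\bd})=H(X_k^{\sigma\bd})
\]
for file permutations $\sigma$. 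For a fixed demand, $H(X_k^{\bd})$ is independent of $k$ only if every shift $g_b\bd$ is a file relabeling of $\bd$. That is, the pattern of equal requests in $\bd$ must be shift invariant. In general $(k',\bd)$ is not in the orbit of $(k,\bd)$, so your orbit remark does not give equalization. Making the $q_s$ demands images of one another also does not help, because what is needed is invariance of each single demand under all shifts.

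\textbf{Why ``outside users request arbitrarily'' breaks the bound.} Suppose, for example, that all outside users request one file. The senders in $\cD$ are exactly the outside users adjacent to the block, and each may then carry more than $RF/K$, so the final inequality does not follow. Your argument also never uses the hypothesis $N\ge K$, and this is precisely where it is needed.

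\textbf{The repair.} Use the paper's demand design: in every instance, all $K$ users request distinct files. The block takes its $s$ files, and the other $K-s$ users take distinct indices among the $N-s\ge K-s$ files not used by the block in that instance. Then for every $b$ there is a permutation $\pi$ with $\pi(g_b\bd^i)=\bd^i$. Hence $H(X_k^{\bd^i})$ is constant in $k$ and at most $RF/K$, and
\[
(sq_s-1)F\le(s-1)MF+(q_s\delta_s-\eta_s)\frac{RF}{K}
\]
follows.

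With this change your proof is essentially the paper's. The paper instead sums the single-block bound over all $K$ cyclic shifts of the block with shifted demands $g_i\bd^{(j)}$. It aligns $H(X_k^{g_i\bd^{(j)}})=H(X_k^{\bd^{(j)}})$ by file-index symmetry, which again needs distinct requests. It then counts that each sender lies in $\delta_s$ boundary sets and in $\eta_s$ of the anchor sets. Your per-sender equalization is an equivalent, slightly more direct form of that incidence count.
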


Choose an anchor user that hears \(\eta_s\) external senders.  File secrecy
removes the anchor's complete observation from the entropy bound.  Summing
the remaining boundary messages over cyclic shifts gives
\eqref{eq:tifs-secure-converse}.  The proof is in
Appendix~\ref{app:tifs-secure-converse}.

\begin{corollary}[Globally optimal PCC-lift endpoint]
\label{cor:tifs-pcc-endpoint}
Under the model of Section~\ref{sec:system-model}, if \(K\ge h+1\) and
\(N>1\), the point \((Nh,K/h)\) in \eqref{eq:tifs-pcc-endpoint} is exactly
optimal.
\end{corollary}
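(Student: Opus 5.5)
The plan is to combine the achievability half of Theorem~\ref{thm:tifs-pcc-lift}, which already gives the point \((Nh,K/h)\), with a simple cut-set type converse showing \(R_{\rm sec}^\star(M,r)\ge K/h\) for \emph{every} \(M\), whenever \(N>1\). Together these give exact optimality at \(M=Nh\). Since the secure region is a closure, it suffices to prove the lower bound for every exactly achievable pair \((M,R)\) at a fixed file size \(F\); the bound then passes to the closure.

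The first step is to use secrecy and \(N>1\) to show that each cache is independent of each individual file. Fix a user \(u\) and a file index \(n\). Because \(N>1\), there is a demand \(\bd\) with \(d_u=m\ne n\). Secrecy \eqref{eq:tifs-file-sec} for that demand gives \(I(W_{[N]\setminus\{m\}};Z_u)=0\), and hence \(I(W_n;Z_u)=0\). Placement does not depend on \(\bd\), so this holds for every \(n\). In particular \(I(W_{d_u};Z_u)=0\) for every demand.

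The second step is a per-user bound. Fix any demand, for example one with all users requesting files, and note that own-broadcast terms vanish because \(X_u^\bd=\psi_u(Z_u,\bd)\). Reliability and the independence from the first step then give
\[
F=H(W_{d_u})=I(W_{d_u};Z_u,X^\bd_{\cN_r(u)}\given\bd)
=I(W_{d_u};X^\bd_{\cN_r(u)\setminus\{u\}}\given Z_u,\bd)
\le\sum_{k\in\cN_r(u)\setminus\{u\}}H(X_k^\bd).
\]

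The final step sums this over \(u\in[K]\). Because \(K\ge h+1\), each sender \(k\) lies in \(\cN_r(u)\setminus\{u\}\) for exactly \(h\) distinct users \(u\). The sum therefore gives \(KF\le h\sum_k H(X_k^\bd)\le hRF\), so \(R\ge K/h\). I expect no serious obstacle. The only delicate point is the first step: secrecy alone does not obviously hide the requested file from the cache, and the argument needs \(N>1\) to pick an alternative demand under which \(W_n\) is a nonrequested file. This is exactly why the corollary assumes \(N>1\).
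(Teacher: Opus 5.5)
Your argument is correct, but it takes a different route from the paper's.

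\textbf{The paper's route.} The paper fixes a user $i$ and uses all $N$ all-equal demands $\bd^{(n)}=(n,\ldots,n)$. Reliability across these $N$ instances recovers the whole library $W_{[N]}$ from $Z_i$ and the $h$ neighboring broadcasts in every instance. Secrecy is invoked only under the anchor demand $\bd^{(1)}$, to discard $\bigl(Z_i,X^{\bd^{(1)}}_{\cG_i^{(1)}}\bigr)$ given $W_1$. The remaining $(N-1)F$ bits must then be carried by the neighbors' broadcasts in the other $N-1$ instances, and cyclic counting gives $K(N-1)F\le h(N-1)RF$.

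\textbf{Your route.} You use secrecy across demands. Since placement is demand-independent, secrecy under a demand with $d_u\ne n$ gives the structural fact $I(W_n;Z_u)=0$ for every $n$. After that, a single demand suffices: the requested file must pass entirely through the $h$ neighboring broadcasts.

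\textbf{What each buys.} Your argument is shorter and needs neither the conditioned secrecy form nor the multi-demand bookkeeping. It also proves slightly more: $\sum_k H(X_k^{\bd})\ge KF/h$ holds for every individual demand, not only for the peak. The paper's argument is the $s=1$ instance of the anchor-removal template behind the block-boundary converse. With $q_1=N$ and $\delta_1=\eta_1=h$, that template reproduces $K(N-1)/(Nh-h)=K/h$, so the high-memory bound stays visibly consistent with Theorem~\ref{thm:tifs-secure-converse}. Using all-equal rather than distinct demands is what lets the paper drop that theorem's $N\ge K$ assumption.
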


\begin{IEEEproof}
The singleton-cut argument in Section~\ref{sec:tifs-high-memory}, applied
to the \(N\) all-equal demand vectors, gives
\(R_{\rm sec}^\star(M,r)\ge K/h\) for every cache budget \(M\).
Theorem~\ref{thm:tifs-pcc-lift} attains this bound at \(M=Nh\).
\end{IEEEproof}

\begin{corollary}[Globally optimal secure high-memory ray]
\label{cor:tifs-secure-high-ray}
Under the same model, if \(K\ge h+1\) and \(N>1\), then
\[
R_{\rm sec}^\star(M,r)=\frac Kh,\qquad M\ge Nh.
\]
\end{corollary}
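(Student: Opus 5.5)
The plan is to prove matching achievability and converse bounds on the ray $M\ge Nh$.

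\emph{Achievability.} Theorem~\ref{thm:tifs-pcc-lift} gives the exactly achievable point $(Nh,K/h)$. If a scheme fits in cache budget $Nh$, it also fits in any larger budget $M\ge Nh$, since the constraint is $H(Z_k)\le MF$. Hence $R_{\rm sec}^\star(M,r)\le K/h$ for all $M\ge Nh$. The secure region is closed, so no further argument is needed for this direction.

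\emph{Converse.} Following Corollary~\ref{cor:tifs-pcc-endpoint}, I would show $R_{\rm sec}^\star(M,r)\ge K/h$ for every $M$ by a singleton cut that uses secrecy.

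\begin{enumerate}
\item Fix a user $u$ and an all-equal demand $\bd=(n,\ldots,n)$, with $N>1$.
\item Pick $m\ne n$. Under the all-equal demand $\bd'=(m,\ldots,m)$, user $u$ recovers $W_m$ from $Z_u$ and its audible broadcasts. Secrecy under $\bd'$ says $Z_u$ alone reveals nothing about $W_n$.
\item Under $\bd$, reliability gives $H(W_n\mid Z_u,X^{\bd}_{\cN_r(u)})=0$. Combining this with $I(W_n;Z_u)=0$ yields
\[
F=H(W_n)\le I\bigl(W_n;X^{\bd}_{\cN_r(u)}\mid Z_u\bigr).
\]
\item Refine the bound to exclude the user's own broadcast. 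Since $X_u^{\bd}$ is a function of $Z_u$, it contributes nothing given $Z_u$. Therefore
\[
F\le \sum_{k\in\cN_r(u)\setminus\{u\}} H\bigl(X_k^{\bd}\bigr).
\]
\item Sum over all $u\in[K]$. Each sender $k$ is audible to exactly $h$ other users, which gives $KF\le h\sum_k H(X_k^{\bd})\le hRF$, so $R\ge K/h$.
\item For file sizes where some broadcasts are unbalanced, nothing extra is needed: the summation already handles arbitrary loads. Lemma~\ref{lem:tifs-symmetrization} could be invoked for symmetry, but it is unnecessary.
\end{enumerate}

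\emph{Main obstacle.} The delicate step is justifying $I(W_n;Z_u)=0$. The secrecy constraint~\eqref{eq:tifs-file-sec} is stated per demand, and it covers $W_n$ only for demands in which user $u$ does not request $n$. Placement is demand-independent, though, so the marginal $(W_{[N]},Z_u)$ is the same for every demand. Choosing $\bd'$ with $d'_u=m\ne n$, which requires $N>1$, makes $Z_u$, a component of the view, independent of $W_n$. This argument holds for every $M$, so combining it with achievability closes the ray.
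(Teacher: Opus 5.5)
Your proof is correct. The achievability half, where the PCC-lift endpoint stays feasible for every larger budget, is exactly the paper's.

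Your converse differs from the one the paper gives in the proof of Corollary~\ref{cor:tifs-pcc-endpoint}. The paper uses all \(N\) all-equal demands at once. It conditions on \(W_1\) and applies the conditioned secrecy constraint \eqref{eq:tifs-file-sec-conditioned} under \(\bd^{(1)}\) to discard the anchor's entire view: its cache together with the \(h\) broadcasts it hears under \(\bd^{(1)}\). It then charges the remaining \(N-1\) files to the \(h(N-1)\) broadcasts under \(\bd^{(2)},\ldots,\bd^{(N)}\), which gives \(K(N-1)F\le h(N-1)RF\).

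You instead take from secrecy only the weaker consequence \(I(W_n;Z_u)=0\). You get it from an auxiliary demand with \(d'_u\ne n\), using the fact that placement does not depend on the demand. You then charge a single file to the \(h\) broadcasts under one demand.

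Your version is shorter and in fact proves more: \(\sum_k H(X_k^{\bd})\ge KF/h\) holds for every demand \(\bd\), all-equal or not. For each \(u\) you only need some demand in which \(u\) requests a file other than \(d_u\).

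The paper's version is the \(s=1\) case (\(q_1=N\), \(\delta_1=\eta_1=h\)) of the anchor-removal argument behind Theorem~\ref{thm:tifs-secure-converse}. For larger blocks, removing the anchor's audible broadcasts along with its cache is what produces the \(-\eta_s\) term, which is why the paper writes the singleton case in that form.
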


\begin{IEEEproof}
The lower bound \(K/h\) holds at every memory budget.  The placement attaining
it at \(Nh\) is feasible for every larger budget.
\end{IEEEproof}

\begin{theorem}[Uniform factor-\(15\) guarantee for PCC-lift]
\label{thm:tifs-pcc-gap}
Under the model of Section~\ref{sec:system-model}, assume
\(N\ge K\ge h+1\), where \(h=2r\).  Let
\(R_{{\rm PCC}\to r}^{\rm up}(M)\) be the lower convex envelope of
\eqref{eq:tifs-pcc-points} and \eqref{eq:tifs-pcc-endpoint}.  Re-index the
points in \eqref{eq:tifs-pcc-points} by \(u=t+1\), so
\(u\in\{1,\ldots,h-1\}\).  Their coordinates are
\begin{equation}
M_u^{{\rm PCC}\to r}
=\frac{Nu}{h-u+1}+1+\frac1u,\qquad
R_u^{{\rm PCC}\to r}=\frac Ku.
\label{eq:tifs-pcc-u-points}
\end{equation}
The endpoint has index \(u=h\) and coordinates
\[
M_h^{{\rm PCC}\to r}=Nh,\qquad R_h^{{\rm PCC}\to r}=\frac Kh.
\]
For
\[
M\in[M_1^{{\rm PCC}\to r},M_h^{{\rm PCC}\to r}]
=\left[\frac Nh+2,Nh\right],
\]
\begin{equation}
R_{{\rm PCC}\to r}^{\rm up}(M)
\le15\,R_{\rm sec}^\star(M,r).
\label{eq:tifs-pcc-gap}
\end{equation}
\end{theorem}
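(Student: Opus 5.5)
The plan is to compare a piecewise upper bound on the envelope with two lower bounds: the singleton-cut bound \(R_{\rm sec}^\star(M,r)\ge K/h\) used in Corollary~\ref{cor:tifs-pcc-endpoint}, and the block-boundary bound of Theorem~\ref{thm:tifs-secure-converse} with a memory-dependent block size \(s\).

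\textbf{Upper bound.} Fix \(M\in[M_u^{{\rm PCC}\to r},M_{u+1}^{{\rm PCC}\to r}]\) for some \(u\in\{1,\ldots,h-1\}\). The envelope is nonincreasing and passes through the points \eqref{eq:tifs-pcc-u-points}, so
\[
R_{{\rm PCC}\to r}^{\rm up}(M)\le R_u^{{\rm PCC}\to r}=K/u .
\]
The problem then reduces to showing \(K/u\le 15\,R_{\rm sec}^\star(M,r)\) throughout each such interval. I would split on whether \(u\) is large or small relative to \(h\).

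\textbf{Large \(u\).} Suppose \(u\ge h/c\) for a fixed constant \(c\), say \(c=4\). Then \(K/u\le cK/h\le c\,R_{\rm sec}^\star\) by the singleton cut. This covers the top of the range, including the last segment ending at the endpoint \((Nh,K/h)\).

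\textbf{Small \(u\).} Suppose \(u<h/c\). I first simplify Theorem~\ref{thm:tifs-secure-converse} using crude estimates. Since \(\delta_s\le h\) and \(q_s\le N/s\), the denominator satisfies \(q_s\delta_s-\eta_s\le Nh/s\). Since \(sq_s\ge N-s+1\), we have \(sq_s-1\ge N-s\). Together these give
\[
R_{\rm sec}^\star\ge \frac{Ks\,[N-s-(s-1)M]^+}{Nh}.
\]
On this segment the memory is controlled. Because \(u<h/c\), we have \(h-u\gtrsim h\), so \(M\le M_{u+1}\lesssim N(u+1)/h+3\). Since \(N\ge K\ge h+1\) gives \(N/h\ge1\), the additive constants are absorbed, and \(M\le C_1Nu/h\) for a small explicit \(C_1\). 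I would then choose the integer \(s\approx N/(2M)\), more precisely \(s=\lfloor N/(2(M+1))\rfloor\) or something close. With this choice \(N-s-(s-1)M\ge N/2\), so \(R_{\rm sec}^\star\gtrsim Ks/(2h)\gtrsim KN/(4hM)\gtrsim K/(4C_1u)\). This is the desired bound up to a constant.

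Two feasibility checks are needed. First, \(s\ge1\): this follows from \(M\le C_1Nu/h\) with \(u<h/c\), which gives \(M<N/2\) roughly. Second, \(s\le K-1\): this follows from \(M\ge N/h+2\), which gives \(s\le h/2<K\). In addition, \(q_s\delta_s>\eta_s\) holds because \(q_s\ge2\) and \(\eta_s\le\delta_s\).

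\textbf{Main obstacle.} The difficulty is bookkeeping the constants. This includes the floor in \(s\), the \(+1+1/u\) memory terms, and the subtracted \(\eta_s\) when \(s\) is small (for example \(s=1\), where \(\eta_1=h\) and the denominator becomes \(q_1h-h\)). All of these must fit under the single factor \(15\). The threshold \(c\) must also be tuned so that both cases close with constant at most \(15\). I would first try \(c=4\), handle \(s=1\) (that is, \(M\) close to \(N/h\)) separately using \(\eta_1=h\) and \(q_1=N\), and then verify the worst segment numerically over the boundary cases.
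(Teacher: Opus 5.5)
Your architecture matches the paper's. You bound the envelope by $K/u$ on $[M_u^{{\rm PCC}\to r},M_{u+1}^{{\rm PCC}\to r}]$, use the singleton cut $K/h$ for large $u$, and use Theorem~\ref{thm:tifs-secure-converse} with a memory-dependent $s$ for small $u$. The gap is in the last step: the simplified bound
\[
R_{\rm sec}^\star(M,r)\ge\frac{Ks\,[N-s-(s-1)M]^+}{Nh}
\]
cannot certify the factor $15$ on the first segment $u=1$, whatever $c$ and $s$ you choose. Take $N=K=h+1$ with $h$ large and $M=M_1^{{\rm PCC}\to r}=3+1/h$, a vertex where the envelope equals $K$. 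Since $N-s-(s-1)M=N+M-s(M+1)$, maximizing over real $s$ shows your bound never exceeds $K(N+M)^2/\bigl(4Nh(M+1)\bigr)$, which tends to $K/16$. At $h=200$ the certified ratio is already about $15.5$. Over the whole segment, where you use $R^{\rm up}_{{\rm PCC}\to r}(M)\le K$ and $M\le M_2^{{\rm PCC}\to r}\to 7/2$, the certified ratio tends to $18$. For $u=1$ and $h\ge16$ the singleton cut gives only ratio $h$, so nothing else rescues this segment. Your heuristic $4C_1\approx14$ hides that the relevant quantity is $M+1$ rather than $M$. A numerical check also cannot cover the infinitely many $(h,N)$.

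What is missing are the two terms your estimate discards.

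\emph{The $\eta_s$ term.} For every $s<h/2$, not only $s=1$, $\eta_s=h+1-s$. Hence the denominator is at most $q_sh-(h+1-s)$, which at the relevant $s\approx h/6$, $N\approx h$ is roughly $14\%$ below $q_sh$.

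\emph{The floor.} The relaxations $sq_s-1\ge N-s$ and $q_s\le N/s$ cannot be tight together. The paper keeps $sq_s$ in the numerator and absorbs the floor by inserting $N\le s(q_s+1)-1$ into $M\le 2N/(h-1)+3/2$. With $s=\lfloor h/6\rfloor$, the $u=1$ case becomes an inequality affine in $q_s$ with positive slope, checked at $q_s=6$.

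Both refinements matter. Lemma~\ref{lem:supp-small-block-lb}, which treats the floor correctly but drops $\eta_s$, still gives a ratio near $16$ on this segment for $N=h+1$ and large $h$. For $u\ge2$ there is enough slack. The paper uses that lemma with $s=\lfloor (h-u)/(3(u+1))\rfloor$ for $3\le u<h/15$ and $s=\lfloor(h-2)/9\rfloor$ for $u=2$, and the singleton cut for $u\ge h/15$. Also, $s=1$ arises at large $M$, not at $M$ near $N/h$.
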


Appendix~\ref{app:tifs-pcc-gap} proves the bound by choosing a block size for
each PCC segment in \eqref{eq:tifs-secure-converse}.  The choice \(s=1\)
controls the high-memory segments through the lower bound \(K/h\).

\section{Composable Secure Lifting from Local Shares to Global Files}
\label{sec:tifs-secure-lifting-framework}

The outer sharing assigns a private local library to each sender.  Each
receiver obtains one requested share from every neighboring local instance.
The proof below combines these local views with the receiver's own sender state to
establish secrecy of the full physical observation.

\subsection{One-Private Outer Sharing}

The outer code must hide each file from any one sender and allow recovery
from the \(h+1\) shares available to a user.  For each file \(W_n\), generate
\(K\) shares \(S_{n,1},\ldots,S_{n,K}\), each with entropy \(B:=F/h\), such that
\begin{align}
H(W_n\given S_{n,\mathcal A})&=0,
&\mathcal A&\subseteq[K],\quad \abs{\mathcal A}=h+1,
\label{eq:tifs-outer-reconstruct}\\
I(W_n;S_{n,k})&=0,
&k&\in[K].
\label{eq:tifs-outer-private}
\end{align}
Each share is uniform on its alphabet.  Sharing is performed independently
across files, so the library \(\{S_{n,k}:n\in[N]\}\) at any one sender position
is independent of the complete file vector.

To construct the shares, choose a finite field \(\mathbb F\) with \(K\)
distinct nonzero points \(\alpha_1,\ldots,\alpha_K\).  Represent \(W_n\) by
\(h\) independent uniform field symbols \(W_{n,1},\ldots,W_{n,h}\).  Draw an
independent uniform pad \(U_n\in\mathbb F\) and define the polynomial
\(p_n(x)\), for \(x\in\mathbb F\), and its shares by
\[
    p_n(x)=U_n+\sum_{j=1}^{h}W_{n,j}x^j,\qquad
    S_{n,k}=p_n(\alpha_k).
\]
The pads are independent across files and independent of the library.  Any
\(h+1\) evaluations interpolate \(p_n\), while the uniform constant term makes
each evaluation independent of \(W_n\).  A share contains one field symbol
and a file contains \(h\), giving \(B=F/h\).

\begin{remark}[Finite-alphabet implementation]
\label{rem:tifs-finite-alphabet}
Let the local kernel have alphabet size \(a=2^{B_0}\ge2\), where \(B_0\)
is its local file entropy.  Write the prime factorization as
\(a=\prod_i p_i^{e_i}\), with distinct primes \(p_i\) and positive integer
exponents \(e_i\).  Choose an integer \(m\ge1\) with
\(p_i^{me_i}>K\) for every \(i\).
Run the polynomial sharing independently over each field
\(\mathbb F_{p_i^{me_i}}\).  A share in their product has alphabet size
\(a^m\), which is in bijection with the input alphabet of \(m\) independent
copies of the original kernel.  The resulting share entropy is \(B=mB_0\)
and the file entropy is \(F=hmB_0\).  Reconstruction and one-share privacy
hold in every field component, and independent kernel copies preserve the
normalized cache and load parameters.  This gives a finite implementation
for every finite uniform alphabet.
\end{remark}

For each sender \(k\) and file-index set \(\mathcal A\subseteq[N]\),
denote the complete local library and its subcollection by
\[
    \cS_k:=\{S_{n,k}:n\in[N]\},\qquad
    \cS_{k,\mathcal A}:=\{S_{n,k}:n\in\mathcal A\}.
\]
Sender \(k\) uses \(\cS_k\) as the local library for the \(h\) receivers in
\(\cN_r(k)\setminus\{k\}\).

\subsection{Requested-Share Channels and Multi-Sender Composition}

At sender \(k\), apply a local scheme satisfying
Definition~\ref{def:tifs-composable-kernel} to the library
\(A_n=S_{n,k}\) and demand \(\bd^{(k)}\).  For a physical receiver
\(u\in\cN_r(k)\setminus\{k\}\), write \(C_u^{(k)}\) for the cache of
local receiver \(\iota_k^{-1}(u)\).  Denote sender \(k\)'s broadcast by
\(X_k\), suppressing its dependence on the fixed demand, and define the
receiver's local view as
\[
V_u^{(k)}:=(C_u^{(k)},X_k).
\]
Local recovery gives \(S_{d_u,k}\).  Conditioned local secrecy makes the law
of \(V_u^{(k)}\) depend on the local library only through that requested
share.  The next lemma composes these channels across the audible senders.

\begin{lemma}[Multi-sender secrecy composition]
\label{lem:tifs-multi-sender}
Fix \(u,\bd\), and let \(\mathcal K_u:=\cN_r(u)\setminus\{u\}\).
Generate the outer sharings independently across files, with uniform
one-private shares.  The local seeds are mutually independent and independent
of the files and all outer-sharing randomness.  Each local kernel satisfies
\eqref{eq:tifs-composable-local-sec}, and \(G_u\) is generated from
\(\cS_u\) and its local seed.  Then
\begin{equation}
\begin{aligned}
&I\!\left(W_{[N]\setminus\{d_u\}};
G_u,\{S_{n,u}:n\in[N]\},\right.\\
&\hspace{2.6em}\left.
\{V_u^{(k)}:k\in\mathcal K_u\}
\given W_{d_u},\bd\right)=0.
\end{aligned}
\label{eq:tifs-multi-sender}
\end{equation}
\end{lemma}

\begin{IEEEproof}
We separate the user's home state and requested shares from the neighboring
nonrequested shares.  Let \(\bar W_u=W_{[N]\setminus\{d_u\}}\) be the
nonrequested file vector and \(\mathbf S_k=(S_{1,k},\ldots,S_{N,k})\) the
local library at sender \(k\).  Write \(H_u\) for the home state,
\(D_u\) for the requested neighboring shares, and \(A_u\) for the other
neighboring shares.  Let \(\mathbf V_u\) collect the neighboring views:
\begin{align*}
 H_u&:=\bigl(G_u,\mathbf S_u\bigr),\\
 D_u&:=\bigl(S_{d_u,k}:k\in\mathcal K_u\bigr),\\
 A_u&:=\bigl(S_{n,k}:n\ne d_u,\ k\in\mathcal K_u\bigr),\\
 \mathbf V_u&:=\bigl(V_u^{(k)}:k\in\mathcal K_u\bigr).
\end{align*}
In the probability expressions below, write
\(h_0\), \(\mathbf s=(s_k:k\in\mathcal K_u)\), \(a\), and
\(\mathbf v=(v_k:k\in\mathcal K_u)\) for their respective realizations.
The nonrequested home shares \((S_{n,u}:n\ne d_u)\) are independent of
\(\bar W_u\) by one-share privacy and independent of all requested-file
shares by filewise generation.  Since \(G_u\) is a function of
\(\mathbf S_u\) and an independent seed, this gives
\begin{equation}
I(\bar W_u;H_u,D_u\given W_{d_u},\bd)=0.
\label{eq:tifs-home-requested-independent}
\end{equation}

For each \(k\in\mathcal K_u\), conditioned local secrecy gives
\begin{equation}
V_u^{(k)}
\;--\;\bigl(S_{d_u,k},\bd^{(k)}\bigr)
\;--\;\bigl(S_{n,k}:n\ne d_u\bigr).
\label{eq:tifs-local-markov}
\end{equation}
The local library is a vector of independent uniform shares, so it has full
support.  Hence a conditional probability distribution \(Q_k\) describes
the local view using the requested share alone.  For every local-library
realization \(\boldsymbol\sigma_k=(\sigma_{1,k},\ldots,\sigma_{N,k})\),
\begin{equation}
P_{V_u^{(k)}\given\mathbf S_k,\bd}
(v_k\given\boldsymbol\sigma_k,\bd)
=Q_k(v_k\given\sigma_{d_u,k},\bd^{(k)}).
\label{eq:tifs-local-channel}
\end{equation}
The local demand \(\bd^{(k)}\) is a fixed projection of \(\bd\).

Given the outer shares, each neighboring view is generated by its sender's
independent seed.  Conditioning additionally on \(W_{[N]}\) and \(H_u\)
preserves this independence, because \(H_u\) uses the independent home seed.
Consequently, at supported realizations with \(D_u=\mathbf s\),
\begin{equation}
\begin{aligned}
&P(\mathbf V_u=\mathbf v\given W_{[N]},H_u,D_u=\mathbf s,A_u,\bd)\\
&\qquad=\prod_{k\in\mathcal K_u}
Q_k(v_k\given s_k,\bd^{(k)}).
\end{aligned}
\label{eq:tifs-seed-product}
\end{equation}
The product does not depend on \(A_u\).  Summing over its realizations \(a\)
gives
\begin{align}
&P_{H_u,D_u,\mathbf V_u\given W_{d_u},\bar W_u,\bd}
(h_0,\mathbf s,\mathbf v)\nonumber\\
&=\sum_a
P_{H_u,D_u,A_u\given W_{d_u},\bar W_u,\bd}(h_0,\mathbf s,a)
\prod_{k\in\mathcal K_u}Q_k(v_k\given s_k,\bd^{(k)})\nonumber\\
&=P_{H_u,D_u\given W_{d_u},\bar W_u,\bd}(h_0,\mathbf s)
\prod_{k\in\mathcal K_u}Q_k(v_k\given s_k,\bd^{(k)})\nonumber\\
&=P_{H_u,D_u\given W_{d_u},\bd}(h_0,\mathbf s)
\prod_{k\in\mathcal K_u}Q_k(v_k\given s_k,\bd^{(k)}).
\label{eq:tifs-multi-sender-factorization}
\end{align}
The last equality uses \eqref{eq:tifs-home-requested-independent}.
The resulting law is independent of \(\bar W_u\), proving
\eqref{eq:tifs-multi-sender}.
\end{IEEEproof}

\subsection{Secure Lifting with Complete Sender-State Accounting}

\begin{IEEEproof}[Proof of Theorem~\ref{thm:tifs-secure-lift}]
Use the outer sharing above, with the alphabet matching in
Remark~\ref{rem:tifs-finite-alphabet} when needed.  User \(u\) stores
\[
Z_u=\bigl(G_u,\{C_u^{(k)}:k\in\mathcal K_u\}\bigr),
\qquad \mathcal K_u=\cN_r(u)\setminus\{u\}.
\]
Since \(\abs{\mathcal K_u}=h\), entropy subadditivity and the kernel budgets give
\begin{align*}
H(Z_u)&\le H(G_u)+\sum_{k\in\mathcal K_u}H(C_u^{(k)})\\
&\le(M_{\rm snd}+hM_{\rm rx})B,\\
\sum_{k=1}^K H(X_k^\bd)&\le KR_{\rm loc}B.
\end{align*}
Dividing by \(F=hB\) gives the achievable budgets in
\eqref{eq:tifs-secure-lift}.

User \(u\) recovers \(S_{d_u,k}\) from each neighboring view through
\eqref{eq:tifs-local-dec} and recovers \(S_{d_u,u}\) from \(G_u\).
These \(h+1\) distinct outer shares reconstruct \(W_{d_u}\) by
\eqref{eq:tifs-outer-reconstruct}.

By Lemma~\ref{lem:tifs-multi-sender}, \(G_u\) and all neighboring views
jointly reveal no information about the nonrequested files, conditioned on
the requested file and the public demand.
The self-transmission \(X_u^\bd=\gamma(G_u,\bd^{(u)})\) is a deterministic
function of \(G_u\) and the public demand.  Thus the user's complete
observation satisfies \eqref{eq:tifs-file-sec-conditioned}, which is equivalent
to \eqref{eq:tifs-file-sec}.
\end{IEEEproof}

\begin{example}[Radius-two secure lifting at \(t=0\)]
\label{ex:tifs-secure-lift}
Take \(K=6\), \(r=2\), \(h=4\), and \(N=3\).  Generate six one-private shares
of each file, with share entropy \(F/4\), so any five reconstruct the file.
Sender \(k\) stores its three-share library \(\{S_{n,k}:n\in[3]\}\) and a
uniform key \(P_{ku}\) of entropy \(F/4\) for every
\(u\in\mathcal K_k:=\cN_r(k)\setminus\{k\}\).  All keys are mutually
independent and independent of the files and outer shares.  Receiver \(u\)
stores \(P_{ku}\) for each neighboring sender \(k\).  Under
\(\bd=(1,2,3,1,2,3)\), sender \(k\) broadcasts
\[
X_{ku}=P_{ku}+S_{d_u,k},\qquad u\in\mathcal K_k,
\]
over the share field.

User \(1\) hears senders \(2,3,5,6\).  Its four receiver keys recover
\(S_{1,2},S_{1,3},S_{1,5},S_{1,6}\), and its sender state supplies
\(S_{1,1}\), completing reconstruction of \(W_1\).  Within each local
instance, packets intended for other receivers are masked by independent
keys absent from the receiver's cache.  The full local view therefore
satisfies \eqref{eq:tifs-composable-local-sec}.
Lemma~\ref{lem:tifs-multi-sender} gives secrecy of the combined observation.
The local parameters are
\[
M_{\rm rx}=1,\qquad M_{\rm snd}=N+h=7,\qquad R_{\rm loc}=h=4.
\]
Equation~\eqref{eq:tifs-secure-lift} then gives \(M=11/4\) and \(R=6\).
\end{example}

\section{Secure PCC-Lift with Sender-State Accounting}
\label{sec:tifs-pcc-ach}

\subsection{PCC Placement and Delivery}

For local delivery, we use PCC, which combines ramp secret sharing with
keyed coded multicast
\cite{RavindrakumarPandaKaramchandaniPrabhakaran2018}.  We account for the
receiver cache, the sender's stored shares and keys, and the broadcast load.
For an integer
\(t\in\{0,\ldots,h-2\}\), its receiver memory and load at sender \(k\) are
\begin{equation}
M_{\rm rx}^{\rm PCC}(t)=1+\frac{Nt}{h-t},
\qquad
R_{\rm loc}^{\rm PCC}(t)=\frac{h}{t+1},
\label{eq:tifs-pcc-local}
\end{equation}
with memory and load normalized by the local file entropy \(B=F/h\).
Receiver shares occupy \(Nt/(h-t)\) local file units, and keys
occupy one unit.  All encoding and delivery operations are linear over a
binary extension field \(\mathbb F=\mathbb F_{2^m}\), with a positive integer
\(m\) chosen large enough for
the outer sharing and the local ramp code.

Fix sender \(k\).  Use the ordering \(\iota_k\) from
Definition~\ref{def:tifs-composable-kernel} to label its receivers by
\(j\in[h]\).  Write \(\mathbf e=\bd^{(k)}\), so
\(e_j=d_{\iota_k(j)}\), and denote receiver \(j\)'s cache by
\(\widetilde C_j^{(k)}:=C_{\iota_k(j)}^{(k)}\).
Let \(n_t\) be the number of coded shares of each local file, \(z_t\) the
number stored by each receiver and the number of random pad blocks, and
\(b_t\) the number of source blocks in that file.  At \(t=0\), set
\(z_0=0\).  The counts are
\begin{align*}
 n_t&:=\binom ht, & z_t&:=\binom{h-1}{t-1},\\
 b_t&:=n_t-z_t=\binom{h-1}{t}.
\end{align*}
Take \(2^m>\max\{K,n_t\}\).  Represent each local file
\(S_{n,k}\) by \(b_t\) equal-length field-symbol blocks and draw \(z_t\)
uniform pad blocks.  All pad blocks are mutually independent across block
indices and sender--file pairs, and independent of the files and
outer-sharing randomness.
Let \(\ell_t\) be the entropy in bits of each coded share.  A linear
\((z_t,n_t)\) ramp code produces one share \(S_{n,k,T}\) for each
\(t\)-element receiver subset \(T\subseteq[h]\):
\[
    S_{n,k,T},\qquad T\subseteq[h],\quad |T|=t,
\]
with
\begin{equation}
    \ell_t=\frac{B}{b_t}
    =\frac{hB}{(h-t)\binom ht}
    =\frac{F}{h\binom{h-1}{t}}.
\label{eq:tifs-pcc-symbol-size}
\end{equation}
Every set of at most \(z_t\) shares is independent of \(S_{n,k}\), and all
\(n_t\) shares reconstruct it.  Receiver \(j\) stores the shares indexed by
sets \(T\) containing \(j\), exactly \(z_t\) shares per local file.
The cached shares reveal no local file content and occupy
\[
\binom{h-1}{t-1}\ell_t=\frac{tB}{h-t}
\]
bits per file, giving \(Nt/(h-t)\) local file units over the library.  At
\(t=0\), the receiver caches no ramp shares.  Taking \(F\) to be an integer
multiple of \(mhb_t\) makes each packet an integer number of field symbols.
Independent vector extensions preserve the normalized point.

For every sender \(k\) and \((t+1)\)-subset \(\mathcal V\subseteq[h]\), draw
a uniform key \(K_{k,\mathcal V}\) of size \(\ell_t\).  These keys are
mutually independent over \((k,\mathcal V)\) and independent of the files,
outer-sharing randomness, and local ramp pads.  Sender \(k\) and all receivers
in \(\mathcal V\) store \(K_{k,\mathcal V}\).  Each receiver stores
\(\binom{h-1}{t}\) keys with aggregate size
\[
\binom{h-1}{t}\ell_t=B,
\]
so its keys occupy one local file unit.  Sender \(k\) broadcasts one packet
for every \(|\mathcal V|=t+1\):
\[
    K_{k,\mathcal V}
    +\sum_{j\in \mathcal V} S_{e_j,k,\mathcal V\setminus\{j\}} .
\]
For receiver \(j\in\mathcal V\), the cache contains the key and every summand
indexed by \(i\ne j\), since \(j\in\mathcal V\setminus\{i\}\).  The packet
therefore supplies the missing share
\(S_{e_j,k,\mathcal V\setminus\{j\}}\).  Across the subsets containing \(j\),
these packets recover all requested shares whose indices omit \(j\).
Together with its cached shares, receiver \(j\) recovers \(S_{e_j,k}\).
The \(\binom h{t+1}\) packets have aggregate size
\(\binom h{t+1}\ell_t=hB/(t+1)\), giving the local rate in
\eqref{eq:tifs-pcc-local}.  Their independent one-time-pad keys also make
the complete delivery vector independent of the local library.

The local sender state contains the complete ramp-coded representation of all
\(N\) local files and every delivery key.  Its normalized size is
\begin{equation}
M_{\rm snd}^{\rm PCC}(t)=\frac{Nh}{h-t}+\frac{h}{t+1}.
\label{eq:tifs-pcc-sender}
\end{equation}
The \(N\binom ht\) shares store the local library and its ramp pads.
The remaining \(\binom h{t+1}\) blocks store the delivery keys.
After lifting, this sender state occupies
\(M_{\rm snd}^{\rm PCC}(t)/h\) original file units at each physical user.

\begin{remark}[Source and encoded packet counts]
The source-symbol subpacketization \(L_t^{\rm src}\) counts the
\(\ell_t\)-bit field-symbol blocks in one original file:
\[
L_t^{\rm src}:=\frac{F}{\ell_t}
=h\binom{h-1}{t}.
\]
Encoding generates \(K\binom ht\) coded blocks across the \(K\) sender
positions.
\end{remark}
\begin{IEEEproof}[Proof of Theorem~\ref{thm:tifs-pcc-lift}]
Substituting \eqref{eq:tifs-pcc-local} and \eqref{eq:tifs-pcc-sender} into
\eqref{eq:tifs-secure-lift} gives
\begin{align*}
M
&=1+\frac{Nt}{h-t}
  +\frac1h\left(\frac{Nh}{h-t}+\frac{h}{t+1}\right)\\
&=\frac{N(t+1)}{h-t}+1+\frac1{t+1},
\end{align*}
and \(R=(K/h)(h/(t+1))=K/(t+1)\).

The packet decoder gives \eqref{eq:tifs-local-dec}.  To prove conditioned
local secrecy, fix local receiver \(j\) and let
\(A=\cS_{k,[N]\setminus\{e_j\}}\) collect its nonrequested local files.
For every such file, the shares in \(\widetilde C_j^{(k)}\) form an
unauthorized ramp-code set.  Encodings are
independent across files, and the requested-file shares and keys use independent
randomness.  Hence
\[
 I(A;\widetilde C_j^{(k)}\given S_{e_j,k},\mathbf e)=0.
\]
Partition the delivery packets according to whether \(j\in\mathcal V\).
When \(j\in\mathcal V\), the receiver's cache contains the key and all
summands except \(S_{e_j,k,\mathcal V\setminus\{j\}}\).
Every summand involving a nonrequested file is therefore cached.  Given
the cache and \(S_{e_j,k}\), packets in this group reveal requested-file
shares whose remaining randomness is independent of \(A\).
When \(j\notin\mathcal V\), the packet is masked by the independent
one-time-pad key \(K_{k,\mathcal V}\), which is absent from
\(\widetilde C_j^{(k)}\).  The distinct keys for this second group make its
packets jointly independent of the local library, cache, and first packet
group.  Combining the two groups gives
\[
 I(A;\widetilde C_j^{(k)},X_k\given S_{e_j,k},\mathbf e)=0,
\]
which gives \eqref{eq:tifs-composable-local-sec}.  The complete ramp-coded
library in \(G_k\) determines \(\cS_k\) and every transmitted share.  The
keys in \(G_k\) determine all packet masks.  Thus the sender recovers its
local library and generates every transmission from \(G_k\).
The sender-state size is given by
\eqref{eq:tifs-pcc-sender}.
The independent sender seeds satisfy the remaining condition of
Theorem~\ref{thm:tifs-secure-lift}.

\emph{High-memory endpoint.}
To attain global load \(K/h\), we construct a local scheme with one
\(B\)-bit broadcast, so \(R_{\rm loc}=1\).  Each receiver will cache
\(h-1\) shares of every local file, and the broadcast will supply its missing
requested share.
Choose \(a\in\mathbb F_{2^m}\setminus\{0,1\}\), using the same field as the
outer sharing.  Such an \(a\) exists because \(2^m>K\ge h+1\ge3\).
For every sender--file pair \((k,n)\), draw \(h-1\) uniform \(B\)-bit
field-symbol blocks \(U_{n,k,1},\ldots,U_{n,k,h-1}\).  All pad blocks are
mutually independent and independent of the files and outer-sharing
randomness.  Define the endpoint shares by
\[
\begin{aligned}
T_{n,k,i}&=U_{n,k,i}, &&i<h,\\
T_{n,k,h}&=S_{n,k}+a\sum_{i=1}^{h-1}U_{n,k,i}.
\end{aligned}
\]
The \(h\) shares recover \(S_{n,k}\) by
\[
S_{n,k}=T_{n,k,h}-a\sum_{i=1}^{h-1}T_{n,k,i}.
\]
Any \(h-1\) shares are independent of \(S_{n,k}\): omitting the last share
leaves only independent pads, and omitting share \(i<h\) leaves the unknown
uniform term \(aU_{n,k,i}\) in the last share.  Since \(a\ne0\), the file
and any \(h-1\) shares also determine the remaining share.

Local receiver \(j\) stores
\(\{T_{n,k,i}:n\in[N],\,i\in[h]\setminus\{j\}\}\), giving
\(M_{\rm rx}=N(h-1)\).

For the local demand \(\mathbf e=(e_1,\ldots,e_h)\), sender \(k\) broadcasts
\[
X_k=\sum_{j=1}^h T_{e_j,k,j}.
\]
Receiver \(j\) knows every summand indexed by \(i\ne j\), since it caches
the \(i\)-th share of every local file.  It recovers \(T_{e_j,k,j}\) and then
\(S_{e_j,k}\).  This decoder works for every local demand vector, including
repeated requests, and the \(B\)-bit packet gives \(R_{\rm loc}=1\).

Given \(S_{e_j,k}\), the \(h-1\) cached requested-file shares determine
\(T_{e_j,k,j}\).  Hence \(X_k\) is a deterministic function of
\((\widetilde C_j^{(k)},S_{e_j,k},\mathbf e)\), and
\[
\begin{aligned}
&I(\cS_{k,[N]\setminus\{e_j\}};
\widetilde C_j^{(k)},X_k\given S_{e_j,k},\mathbf e) \\
&\quad=
I(\cS_{k,[N]\setminus\{e_j\}};
\widetilde C_j^{(k)}\given S_{e_j,k},\mathbf e).
\end{aligned}
\]
For every \(n\ne e_j\), the cache contains exactly \(h-1\) shares of
\(S_{n,k}\), an unauthorized set independent of that local file.  Independent
sharing across file indices makes the full nonrequested cache contribution
independent of \(\cS_{k,[N]\setminus\{e_j\}}\), so the right-hand side is zero.
This proves \eqref{eq:tifs-composable-local-sec}.

The sender stores the complete \(h\)-share representation of its local
library, of size \(NhB=NF\), and hence \(M_{\rm snd}=Nh\).
Theorem~\ref{thm:tifs-secure-lift} gives
\((M,R)=(N(h-1)+N,K/h)=(Nh,K/h)\).
\end{IEEEproof}
\begin{proposition}[Global delivery-only secrecy of the high-memory endpoint]
\label{prop:tifs-endpoint-delivery-secrecy}
For the high-memory endpoint constructed above, every global demand \(\bd\)
gives
\begin{equation}
I\!\left(W_{[N]};X_1^{\bd},\ldots,X_K^{\bd}\given\bd\right)=0.
\label{eq:tifs-endpoint-global-delivery-secrecy}
\end{equation}
\end{proposition}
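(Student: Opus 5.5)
The plan is to show that the joint delivery vector \((X_1^{\bd},\ldots,X_K^{\bd})\) is uniformly distributed and independent of the full outer-share collection \(\{S_{n,k}\}\). That independence implies independence from \(W_{[N]}\), because the files are recovered from the shares and the pads. For each sender \(k\), write the endpoint broadcast explicitly as
\[
X_k=\sum_{j=1}^{h}T_{e_j,k,j}
=\sum_{j=1}^{h-1}U_{e_j,k,j}+S_{e_h,k}+a\sum_{i=1}^{h-1}U_{e_h,k,i},
\qquad \mathbf e=\bd^{(k)}.
\]
This is a linear combination over \(\mathbb F_{2^m}\) of pad blocks of sender \(k\) only, plus the single outer share \(S_{e_h,k}\).

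The key step is to exhibit, for each \(k\), one pad block that appears in \(X_k\) with a nonzero coefficient. Take \(U_{e_h,k,1}\). Its coefficient is \(a\) from the last share. If \(e_1=e_h\), the term \(U_{e_1,k,1}\) coincides with it and contributes an additional \(1\), giving total coefficient \(a+1\). Because \(a\notin\{0,1\}\), both \(a\) and \(a+1\) are nonzero in characteristic two. This is precisely where the choice of \(a\) matters: repeated requests are the only way pad terms can cancel. Since \(h\ge2\), the index \(i=1\le h-1\) is a valid pad index.

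These pads \(U_{e_h,k,1}\), \(k\in[K]\), are distinct blocks for distinct \(k\), because pads are indexed by sender--file pairs. They are mutually independent, uniform, and independent of the files, the outer-sharing randomness, and all other pads. Conditioning on \(W_{[N]}\), the outer pads \(U_n\), and all other endpoint pads therefore leaves
\[
X_k=c_kU_{e_h,k,1}+(\text{fixed}),\qquad c_k\ne0.
\]
Each \(X_k\) is then an independent one-time pad. The vector \((X_1,\ldots,X_K)\) is uniform on \(\mathbb F^{K}\) (blockwise) under every such conditioning, so its law does not depend on \(W_{[N]}\). This gives \eqref{eq:tifs-endpoint-global-delivery-secrecy}.

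The main obstacle I expect is bookkeeping rather than anything conceptual. I need to check that no pad of sender \(k\) appears in any other broadcast, which is immediate because \(X_{k'}\) uses only \(T_{\cdot,k',\cdot}\). I also need to confirm that the coefficient computation correctly handles all collisions among the \(e_j\). A collision \(e_j=e_h\) for \(j\neq1\) affects only \(U_{e_h,k,j}\), not the chosen \(U_{e_h,k,1}\), so picking index \(1\) isolates the only possible cancellation.
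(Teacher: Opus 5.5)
Your proof is correct, but it is organized differently from the paper's.

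The paper argues sender by sender. It groups the summands of \(X_k\) by file index into \(Y_{k,n}=\sum_{j:e_j=n}T_{n,k,j}\). It then uses the unauthorized-set property of the endpoint shares when fewer than \(h\) local receivers request \(n\), and the coefficient \(a+1\ne0\) when all \(h\) do, to conclude \(P_{X_k\mid\cS_k,\bd}=P_{X_k\mid\bd}\). Finally, it obtains the joint law of \(X_{[K]}\) given \(W_{[N]}\) by averaging, over the correlated outer libraries, a product of per-sender channels that no longer depend on those libraries.

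You instead condition once on everything except the \(K\) pads \(U_{e_h,k,1}\) (with \(\mathbf e=\bd^{(k)}\)), and check that each appears only in \(X_k\), with coefficient \(a\) or \(a+1\). This one-time-pad argument has three advantages: it makes the correlation among the outer shares irrelevant, it replaces the multiplicity case analysis by a single coefficient check, and it identifies the transcript law explicitly as uniform.

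What it gives up is modularity, since it relies on the explicit linear form of this particular kernel. The paper's two-step structure (per-sender independence from the local library, then averaging over the outer shares) parallels Lemma~\ref{lem:tifs-multi-sender}. It therefore applies unchanged to any local kernel whose broadcast is independent of its local library and whose randomness is independent across senders.
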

\begin{IEEEproof}
Fix sender \(k\) and its projected demand \(\mathbf e=\bd^{(k)}\).
For each file index \(n\), let \(\mathcal A_{k,n}\subseteq[h]\) be the
set of local receivers requesting that file.  Let \(Y_{k,n}\) be the sum
of the corresponding endpoint shares, with an empty sum equal to zero:
\[
\mathcal A_{k,n}:=\{j\in[h]:e_j=n\},\qquad
Y_{k,n}:=\sum_{j\in\mathcal A_{k,n}}T_{n,k,j}.
\]
For \(1\le|\mathcal A_{k,n}|\le h-1\), the selected shares form an unauthorized
set, so \(Y_{k,n}\) is independent of \(S_{n,k}\).  When all \(h\) receivers
request file \(n\),
\[
Y_{k,n}=S_{n,k}+(a+1)\sum_{i=1}^{h-1}U_{n,k,i}.
\]
The coefficient \(a+1\) is nonzero, so this sum is also independent of
\(S_{n,k}\).  The empty-set case gives \(Y_{k,n}=0\).  These cases cover
every request multiplicity.  Independent pads across files give
\begin{equation}
P_{(Y_{k,n})_{n\in[N]}\mid\cS_k,\bd}
= P_{(Y_{k,n})_{n\in[N]}\mid\bd}.
\label{eq:tifs-one-sender-endpoint-delivery-channel}
\end{equation}
Since \(X_k^{\bd}=\sum_nY_{k,n}\),
\(P_{X_k^{\bd}\mid\cS_k,\bd}=P_{X_k^{\bd}\mid\bd}\).

Given the outer libraries, independent pads across senders make their
broadcasts conditionally independent.  We average over the outer libraries
to obtain the joint broadcast law.  Write \(w_{[N]}\), \(s_{[K]}\), and
\(x_{[K]}\) for realizations of the file vector, the tuple of outer libraries,
and the broadcast vector, respectively.  Then
\begin{align*}
&P_{X_{[K]}^{\bd}\mid W_{[N]},\bd}
   (x_{[K]}\mid w_{[N]},\bd)\\
&\quad=\sum_{s_{[K]}}P_{\cS_{[K]}\mid W_{[N]}}
  (s_{[K]}\mid w_{[N]})
  \prod_{k=1}^K P_{X_k^{\bd}\mid\cS_k,\bd}(x_k\mid s_k,\bd)\\
&\quad=\prod_{k=1}^K P_{X_k^{\bd}\mid\bd}(x_k\mid\bd),
\end{align*}
where the second equality applies
\eqref{eq:tifs-one-sender-endpoint-delivery-channel} before summing over the
correlated outer shares.  The resulting distribution is independent of
\(w_{[N]}\), proving
\eqref{eq:tifs-endpoint-global-delivery-secrecy}.
\end{IEEEproof}
\begin{example}[A coded secure PCC-lift instance]
\label{ex:tifs-secure-pcc-coded}
Take \(K=6\), \(r=2\), \(h=4\), \(N=4\), and \(t=1\).
In this example, label receivers and share-index sets by physical user
indices in \(\mathcal K_k=\cN_r(k)\setminus\{k\}\).
Each outer share has size \(B=F/4\).  Sender \(k\)'s local ramp code produces
four shares per local file,
\[
 S_{n,k,\{u\}},\qquad u\in\mathcal K_k,
\]
of size \(B/3\).  Receiver \(u\) caches the share indexed by \(\{u\}\) for
every \(n\).  For every receiver pair in \(\mathcal K_k\), a key of size
\(B/3\) is stored by the sender and the two receivers.

Consider sender \(1\), whose local receivers are
\(\mathcal K_1=\{5,6,2,3\}\), under the demand
\(\bd=(1,3,4,2,1,2)\).  For every pair
\(\{u,v\}\subseteq\mathcal K_1\), it transmits
\[
X_{1;uv}=K_{1,\{u,v\}}+
S_{d_u,1,\{v\}}+ S_{d_v,1,\{u\}}.
\]
For example,
\(X_{1;56}=K_{1,\{5,6\}}+ S_{1,1,\{6\}}+ S_{2,1,\{5\}}\).
User \(5\) cancels its cached key \(K_{1,\{5,6\}}\) and share
\(S_{2,1,\{5\}}\) to obtain \(S_{1,1,\{6\}}\).
User \(6\) obtains \(S_{2,1,\{5\}}\) symmetrically.  The
pairs \(\{5,2\}\) and \(\{5,3\}\) supply user \(5\)'s other two missing
shares of \(S_{1,1}\), which combine with its cached share
\(S_{1,1,\{5\}}\).  Every packet serves two receivers.
Packets for pairs that exclude user \(5\) are masked by keys absent from
its cache.  Its complete local view
\((C_5^{(1)},X_1)\) therefore satisfies the conditioned secrecy relation
required by Lemma~\ref{lem:tifs-multi-sender}.

With the local operation applied at every sender, user \(5\) recovers the four
neighboring outer shares \(S_{1,k}\), \(k\in\{1,3,4,6\}\), and joins them with
its home share \(S_{1,5}\) to reconstruct \(W_1\).
Lemma~\ref{lem:tifs-multi-sender} gives file secrecy for user \(5\)'s
complete observation.  The cache sizes are
\[
 M_{\rm rx}=1+\frac{N}{h-1}=\frac73,\qquad
 M_{\rm snd}=\frac{Nh}{h-1}+\frac h2=\frac{22}{3},
\]
and Theorem~\ref{thm:tifs-secure-lift} gives
\[
 (M,R)=\left(\frac{25}{6},3\right),
\]
which is \eqref{eq:tifs-pcc-points} at \(t=1\).
\end{example}
\subsection{Improvements in the Local Kernel}

The lifting map \eqref{eq:tifs-secure-lift} is affine and increasing in each
local parameter.  Let \(A\) and \(B\) be two local schemes satisfying
Definition~\ref{def:tifs-composable-kernel}, with
\[
M_{\rm rx}^{A}\le M_{\rm rx}^{B},\qquad
M_{\rm snd}^{A}\le M_{\rm snd}^{B},\qquad
R_{\rm loc}^{A}\le R_{\rm loc}^{B}.
\]
Their lifted points satisfy \(M^{A}\le M^{B}\) and \(R^{A}\le R^{B}\).
A strict decrease in either memory coordinate gives \(M^{A}<M^{B}\).
A strict decrease in local rate gives \(R^{A}<R^{B}\).
Thus local improvements under the stated kernel conditions transfer directly
to the radius-\(r\) D2D tradeoff.

\section{Secure Converse and Performance Guarantees}
\label{sec:tifs-secure-converse-details}

\subsection{Radius-Aware Block-Boundary Converse}

The block-boundary converse bounds the load needed to deliver files to a
contiguous user block.  It uses demand vectors in which every user requests
a distinct file and aligns these demands across cyclic shifts of the block.
For each block, the argument selects an anchor user and conditions on that
user's requested file in one demand instance.  By file secrecy, the
anchor's cache and audible broadcasts reveal no information about the
other \(sq_s-1\) files requested by the block across its \(q_s\) demand
instances.  Removing the anchor's cache and the \(\eta_s\) boundary
broadcasts it hears in that instance leaves \(s-1\) caches and
\(q_s\delta_s-\eta_s\) boundary broadcasts in the entropy bound.
Summing this bound over cyclic shifts gives
Theorem~\ref{thm:tifs-secure-converse}.
\subsection{Exact High-Memory Ray and Uniform Factor-15 Guarantee}
\label{sec:tifs-high-memory}

Considering one user's recovery under all-equal demands gives the lower bound \(K/h\)
for every cache size under \(K\ge h+1\) and \(N>1\).  The PCC-lift endpoint
attains this bound at \(M=Nh\), so it is optimal at and above this cache size.  For
\(N\ge K\ge h+1\), the block-boundary converse also supplies the uniform
factor-\(15\) guarantee throughout \([N/h+2,Nh]\).
\begin{IEEEproof}[Proof of Corollary~\ref{cor:tifs-pcc-endpoint}]
For each \(n\in[N]\), let
\(\bd^{(n)}=(n,\ldots,n)\) be the all-equal demand vector.  Fix a user \(i\),
and let \(\cG_i^{(1)}\) be the \(h=2r\) other transmitters heard by that user.
The user's own transmission is a function of its cache and the demand, so
reliability over the \(N\) demand vectors gives
\[
H\!\left(W_{[N]}\given
Z_i,\{X_{\cG_i^{(1)}}^{\bd^{(n)}}:n\in[N]\}\right)=0.
\]
Choose \(\bd^{(1)}\) as the anchor demand and define
\[
A_i:=\bigl(Z_i,X_{\cG_i^{(1)}}^{\bd^{(1)}}\bigr),\qquad
B_i:=\{X_{\cG_i^{(1)}}^{\bd^{(n)}}:n=2,\ldots,N\}.
\]
The conditioned file-secrecy constraint for user \(i\) under
\(\bd^{(1)}\) gives
\[
I(W_{[N]\setminus\{1\}};A_i\given W_1)=0.
\]
The pair \((A_i,B_i)\) recovers the library, and the chain rule yields
\[
(N-1)F\le H(B_i)
\le \sum_{n=2}^N\sum_{k\in\cG_i^{(1)}}
H(X_k^{\bd^{(n)}}).
\]
Summing over all \(K\) singleton blocks counts every transmitter in exactly
\(h\) sets \(\cG_i^{(1)}\).  The peak-load constraint therefore gives
\[
K(N-1)F\le h\sum_{n=2}^N\sum_{k=1}^K
H(X_k^{\bd^{(n)}})\le h(N-1)RF.
\]
Since \(N>1\), division gives
\(R_{\rm sec}^\star(M,r)\ge K/h\).  The endpoint in
Theorem~\ref{thm:tifs-pcc-lift} attains this load.
\end{IEEEproof}
To prove the uniform guarantee, choose a converse block size for each
PCC-lift memory-sharing segment and compare the two load bounds on that
segment.  The segment-wise inequalities establish
\(R_{{\rm PCC}\to r}^{\rm up}(M)\le
15R_{\rm sec}^\star(M,r)\) over \([N/h+2,Nh]\).
\subsection{Secure Tradeoffs Across Connectivity Radii}

Under \(N\ge K\ge h+1\), define the maximized converse
\begin{equation}
R_{\rm lb}(M):=
\max_{\substack{s\in[K-1]:\\q_s\delta_s>\eta_s}}
\frac{K[(sq_s-1)-(s-1)M]^+}{q_s\delta_s-\eta_s}.
\label{eq:tifs-numerical-lb}
\end{equation}
The \(s=1\) term gives \(R_{\rm lb}(M)\ge K/h\).
Figure~\ref{fig:tifs-secure-numerical} compares this bound with the PCC-lift
envelope at \(N=K=18\) for \(r=2\) and \(r=6\).  Each curve is drawn from
its exact affine segments.  Because both curves are piecewise affine and
the converse is positive, their ratio is monotone or constant between
successive breakpoints.  Its maximum over the memory interval is therefore
obtained by checking the endpoints and the union of the two curves'
breakpoints.
\begin{figure*}[t]
  \centering
  \includegraphics[width=0.98\textwidth]{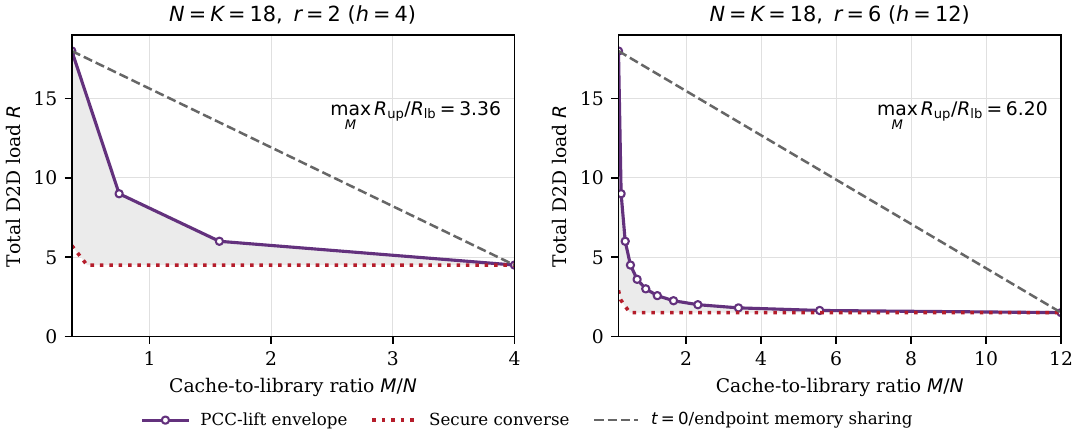}
  \caption{Secure memory--load tradeoffs at \((N,K)=(18,18)\).
  The purple curve is the PCC-lift lower convex envelope, and the red curve
  is \(R_{\rm lb}(M)\) in \eqref{eq:tifs-numerical-lb}.
  The gray line memory-shares between secure unicast (\(t=0\)) and the exact
  endpoint.  In the annotations, \(R_{\rm up}\) denotes the PCC-lift envelope.
  Each annotated value equals
  \(\max_{M\in[N/h+2,Nh]}
  R_{{\rm PCC}\to r}^{\rm up}(M)/R_{\rm lb}(M)\), rounded to two decimals.
  The exact maxima are \(47/14\) for \(r=2\) and \(31/5\) for \(r=6\).
  At \(M/N=h\), the PCC-lift load and \(R_{\rm lb}(M)\) both equal \(K/h\).}
  \label{fig:tifs-secure-numerical}
\end{figure*}
For \(r=2\), the largest ratio is \(47/14\) at \(M=35/4\), where the
\(s=1\) and \(s=2\) converse branches meet.  For \(r=6\), the largest ratio
is \(31/5\) at the left endpoint \(M=7/2\).
Increasing the radius from \(2\) to \(6\) lowers the high-memory optimum
from \(9/2\) to \(3/2\).  The corresponding PCC-lift endpoint operating
points have cache sizes \(72\) and \(216\), respectively.

\section{Modular MDS Lifting Benchmark}
\label{sec:nonsecure-counterpart}

We study the storage and communication needed for reliable file recovery
in the non-secure model of Section~\ref{sec:system-model}.
We use a shared-link scheme whose sender generates each broadcast from the
local library and public demand.  An MDS outer code then gives a
radius-limited D2D scheme by storing each local sender's library in its
physical cache.  The memory
budget also includes the receiver caches from neighboring instances.

Consider a local library \(A_{[N]}\) of independent, uniformly distributed
files of equal entropy.  Let the positive integer \(K_0\) be the number of
receivers in the shared-link kernel, and let \(M_0,R_0\ge0\) be its memory
and load budgets in local-file units.  The kernel
has memory--rate pair \((M_0,R_0)\) when each receiver cache has entropy at
most \(M_0\) file units, the broadcast has entropy at most \(R_0\) file units
for every demand, and every receiver decodes its requested file.
For a fixed public local demand
\(\mathbf e=(e_1,\ldots,e_{K_0})\in[N]^{K_0}\), the deterministic delivery
map \(\gamma\) produces
\[
X^{\mathbf e}=\gamma(A_{[N]},\mathbf e).
\]
We call the kernel sender-realizable when the local library is its complete
delivery-encoder state.  In the D2D construction, each physical sender stores
this state in its cache.
\subsection{MDS Lifting}

\begin{theorem}[MDS lifting of a shared-link kernel]
\label{thm:tifs-ns-lift}
Under the single-shot, non-interactive, noiseless orthogonal local-broadcast
model of Section~\ref{sec:system-model}, let \(h=2r\) and \(K\ge h+1\).  Every
sender-realizable \(K_0=h\) receiver shared-link kernel with memory--rate pair
\((M_0,R_0)\), normalized by the local MDS-share size and with the local library
as its complete encoder state, induces the radius-\(r\) D2D point
\begin{equation}
    (M,R)=
    \left(\frac{N+hM_0}{h+1},\frac{K}{h+1}R_0\right).
\label{eq:tifs-ns-lift}
\end{equation}
\end{theorem}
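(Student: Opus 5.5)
The plan mirrors the proof of Theorem~\ref{thm:tifs-secure-lift}, with the secret-sharing outer code replaced by an MDS code and no secrecy step.

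\emph{Outer code.} Split each file \(W_n\) into \(h+1\) independent uniform field symbols \(W_{n,1},\ldots,W_{n,h+1}\) over a field \(\mathbb F\) with at least \(K\) distinct points \(\alpha_1,\ldots,\alpha_K\). Set
\[
S_{n,k}=p_n(\alpha_k),\qquad p_n(x)=\sum_{j=0}^{h}W_{n,j+1}x^j .
\]
Each share has entropy \(B=F/(h+1)\), and any \(h+1\) shares interpolate \(p_n\) and hence recover \(W_n\). Every single share is uniform, and shares at one position are independent across files, because each \(W_{n,j}\) is uniform and \(\alpha_k\neq0\). We may also take \(\alpha_k=0\) for some \(k\); the constant-term share is still uniform. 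Thus the local library \(\cS_k=\{S_{n,k}:n\in[N]\}\) at each sender consists of \(N\) independent uniform files of entropy \(B\), which is exactly the input the kernel assumes. For general kernel alphabets, the same prime-factor and vector-extension device as Remark~\ref{rem:tifs-finite-alphabet} aligns the share alphabet with the kernel's file alphabet while preserving normalized parameters.

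\emph{Placement and delivery.} At each sender \(k\), run the kernel with library \(A_n=S_{n,k}\) and \(K_0=h\) local receivers labeled through \(\iota_k\). The local demand is \(\bd^{(k)}\). User \(u\) stores
\[
Z_u=\bigl(\cS_u,\{C_u^{(k)}:k\in\cN_r(u)\setminus\{u\}\}\bigr).
\]
Here \(\cS_u\) is the complete encoder state, which is why sender-realizability is needed. The sender broadcasts \(X_u^\bd=\gamma(\cS_u,\bd^{(u)})\), a deterministic function of its cache and the public demand, as the model requires. By subadditivity,
\[
H(Z_u)\le (N+hM_0)B,\qquad \sum_k H(X_k^\bd)\le KR_0B.
\]
Dividing by \(F=(h+1)B\) gives \eqref{eq:tifs-ns-lift}. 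Kernel placement may use independent randomness, but the kernel is charged only through \(M_0\), and no retained seed is needed by the sender.

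\emph{Decoding.} User \(u\) decodes \(S_{d_u,k}\) from \((C_u^{(k)},X_k^\bd)\) for each of its \(h\) neighbors \(k\) by kernel correctness. It reads \(S_{d_u,u}\) from its own cache. These are \(h+1\) distinct share positions, since \(K\ge h+1\) makes the window \(\cN_r(u)\) consist of \(h+1\) distinct users. The MDS property then reconstructs \(W_{d_u}\).

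\emph{Main obstacle.} The delicate point is the exactness and finite-alphabet bookkeeping: \(F\) must be an admissible multiple so that shares are integral field symbols and match the kernel's file alphabet, and the normalization must be by share size. I would handle this by concatenating independent kernel copies, as in Remark~\ref{rem:tifs-finite-alphabet}, and then taking the closure. Correctness itself needs no independence argument, because reliability is checked per user and the shares' mutual dependence does not affect zero-error decoding.
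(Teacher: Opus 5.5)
Your proposal is correct and follows the paper's proof essentially step for step. It uses the same degree-$h$ polynomial MDS outer code with share entropy $F/(h+1)$, the same cache $Z_u$ (home shares plus the $h$ neighboring receiver caches), the same subadditivity accounting, the same alphabet matching via Remark~\ref{rem:tifs-finite-alphabet}, and the same decoding from $h$ neighboring requested shares plus the home share. One minor point: each share is uniform because the constant coefficient $W_{n,1}$ is uniform and independent of the other coefficients, not because $\alpha_k\neq0$, as your own $\alpha_k=0$ remark already shows.
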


\begin{IEEEproof}[Proof of Theorem~\ref{thm:tifs-ns-lift}]
Each user will obtain one share from its own cache and one from each of its
\(h\) neighboring senders.  We therefore encode each file so that any
\(h+1\) shares recover it.
Use the alphabet matching in Remark~\ref{rem:tifs-finite-alphabet} to run
\(m\) independent copies of the given kernel.  If an original local file has
entropy \(B_0\), each extended local file has entropy \(B=mB_0\).  Form an
original file \(W_n\) from \(h+1\) such blocks and use them as the coefficients
of a polynomial of degree at most \(h\) in each field component.  Evaluating
at \(K\) distinct field points gives MDS shares
\(Y_{n,1},\ldots,Y_{n,K}\), each of entropy \(B=F/(h+1)\).
Any \(h+1\) shares recover \(W_n\).  Each sender's shares are independent across files,
so they form a valid local library for the repeated kernel.  Sender \(k\)'s
local library is
\(\mathcal Y_k=\{Y_{n,k}:n\in[N]\}\).  In the local instance centered at
sender \(k\), apply the \(K_0=h\) receiver shared-link kernel to receiver set
\(\cN_r(k)\setminus\{k\}\), using the projected demand \(\bd^{(k)}\).
Let \(C_u^{(k)}\) be the local cache symbol
assigned to receiver \(u\).

User \(u\) stores its home shares \(\{Y_{n,u}:n\in[N]\}\) and all
\(C_u^{(k)}\) for \(k\in\cN_r(u)\setminus\{u\}\).  Its cache entropy obeys
\[
H(Z_u)\le\frac{NF}{h+1}+h\frac{M_0F}{h+1}
=\frac{N+hM_0}{h+1}F.
\]
For demand \(\bd\), sender \(k\) evaluates the local delivery map from
\((\mathcal Y_k,\bd^{(k)})\) and broadcasts \(X_k^\bd\), with
\(H(X_k^\bd)\le R_0F/(h+1)\).  Thus
\[
\sum_{k=1}^K H(X_k^\bd)\le\frac{KR_0}{h+1}F.
\]

User \(u\) recovers \(Y_{d_u,k}\) from every local instance centered at
\(k\in\cN_r(u)\setminus\{u\}\) and has \(Y_{d_u,u}\) in its cache.  These
\(h+1\) MDS shares reconstruct \(W_{d_u}\).
\end{IEEEproof}
The map in \eqref{eq:tifs-ns-lift} scales a change in local memory by
\(h/(h+1)\) and a change in local rate by \(K/(h+1)\).  A componentwise
improvement in \((M_0,R_0)\) therefore improves the corresponding D2D point.

\begin{remark}[Packet count inherited from the local kernel]
Within each repeated kernel copy, the packets of an MDS share retain the
selected shared-link kernel's indices.  For a fixed outer alphabet
implementation, a kernel with fewer packets per local file, including a
PDA-based design, uses fewer packets per MDS share.
\end{remark}
\subsection{MAN Specialization and Performance Guarantees}

The MAN shared-link kernel gives an explicit family of D2D operating
points.  For \(h\) receivers and \(\tau=0,\ldots,h\), its memory--rate pair is
\[
M_0=\frac{N\tau}{h},\qquad
R_0=\frac{h-\tau}{\tau+1}.
\]
Each file is split into \(\binom h\tau\) packets indexed by \(\tau\)-subsets
of receivers.  Receiver \(j\in[h]\) caches the packets whose index contains \(j\).
Delivery sends one finite-field sum for every \((\tau+1)\)-subset.
For \(t=1,\ldots,h\), set \(\tau=t-1\) and denote the memory and load
obtained from Theorem~\ref{thm:tifs-ns-lift} by
\(M_t^{\rm bb}\) and \(R_t^{\rm bb}\).  The MAN-lift points are
\begin{equation}
    \left(M_t^{\rm bb},R_t^{\rm bb}\right)
    =
    \left(\frac{tN}{h+1},\frac{K(h+1-t)}{(h+1)t}\right),
    \qquad t=1,\ldots,h,
\label{eq:tifs-bb-points}
\end{equation}
together with the full-cache point \((N,0)\).  Let \(R_{\rm bb}(M)\) be the
lower convex envelope of these points.  Appendix~\ref{app:tifs-ns-example}
gives the complete placement, delivery, and decoding for a six-user instance.
\begin{theorem}[Block-boundary converse]
\label{thm:tifs-ns-converse}
Under the single-shot, non-interactive, noiseless orthogonal local-broadcast
model of Section~\ref{sec:system-model}, assume \(N\ge K\ge h+1\).  Every
reliability-only cyclic D2D scheme satisfies
\begin{equation}
R^\star(M,r)\ge
\max_{1\le s\le K-1}
\frac{K}{\delta_s}
\left[s\left(1-\frac{sM}{N}\right)\right]^+,
\label{eq:tifs-ns-converse}
\end{equation}
where \(\delta_s=\min\{h,K-s\}\).
\end{theorem}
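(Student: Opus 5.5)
We prove Theorem~\ref{thm:tifs-ns-converse} with a cut-set argument over a block of consecutive users, combined with the symmetrization of Lemma~\ref{lem:tifs-symmetrization}.

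\emph{Setup.} By Lemma~\ref{lem:tifs-symmetrization} we may assume the scheme is invariant under cyclic shifts and file permutations. Then every cache has the same entropy profile, and for any fixed demand pattern the sum of broadcast entropies is shift-invariant. Fix \(s\in[K-1]\) and the block \(\mathcal B=\{1,\ldots,s\}\). Its external audible senders \(\mathcal G\subseteq[K]\setminus\mathcal B\) are the users within distance \(r\) of the block but outside it, so \(|\mathcal G|=\min\{h,K-s\}=\delta_s\). Broadcasts of senders inside the block are functions of the block's caches and the demand, so they can be dropped from the observation.

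\emph{Demand instances.} Let \(q=\lfloor N/s\rfloor\ge1\), which holds since \(N\ge K>s\). Choose \(q\) demand vectors \(\bd^{(1)},\ldots,\bd^{(q)}\) so that the block's users request \(s\) distinct files in each instance and \(sq\) distinct files overall. Reliability then gives
\[
H\bigl(W_{[sq]}\given Z_{\mathcal B},\{X_{\mathcal G}^{\bd^{(i)}}\}_{i\le q}\bigr)=0 .
\]
Taking entropies yields \(sqF\le sMF+\sum_{i}\sum_{k\in\mathcal G}H(X_k^{\bd^{(i)}})\).

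A single-instance bound is weak, so we strengthen it using the uncached part. Let \(Z_{\mathcal B}\) carry at most \(sM\) file units. Because the files are exchangeable, we average over file permutations in the Han/Sengupta style: with \(sq\) files requested, the caches cover, on average, at most a fraction \(sM/N\) of each requested file. This gives
\[
\sum_{i\le q}\sum_{k\in\mathcal G}H\bigl(X_k^{\bd^{(i)}}\bigr)\;\ge\; sqF\Bigl(1-\frac{sM}{N}\Bigr)
\]
up to the rounding of \(q\). The cleanest route is to take \(q=1\) and bound \(H(W_{[s]}\given Z_{\mathcal B})\) by symmetrization: under a file-permutation-invariant scheme, \(I(W_{[s]};Z_{\mathcal B})\le \frac{s}{N}H(Z_{\mathcal B})\le \frac{s^2M}{N}F\). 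This holds because the per-file mutual informations \(I(W_n;Z_{\mathcal B}\given W_{\text{prior}})\) are equalized over \(n\), and a chain rule over all \(N\) files bounds their total by \(H(Z_{\mathcal B})\). It gives \(\sum_{k\in\mathcal G}H(X_k^{\bd})\ge sF(1-sM/N)\).

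\emph{Summing over shifts.} Sum this inequality over the \(K\) cyclic shifts of the block, using a distinct-demand vector \(\bd\) whose shifts realize the same pattern. Each sender \(k\) lies in the external set \(\mathcal G\) of exactly \(\delta_s\) shifted blocks. By shift invariance, the left side is at most \(\delta_s\sum_k H(X_k^{\bd})\le\delta_s RF\), and the right side is \(KsF(1-sM/N)\). Dividing and taking \([\cdot]^+\) and the maximum over \(s\) gives \eqref{eq:tifs-ns-converse}.

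The main obstacle is the symmetrization step, namely the bound \(I(W_{[s]};Z_{\mathcal B})\le (s/N)\cdot sMF\). We would prove it by averaging the chain-rule decomposition over all orderings of the \(N\) files: each ordering's telescoping sum equals \(I(W_{[N]};Z_{\mathcal B})\le sMF\), and submodularity makes the average contribution of the first \(s\) files at least their share.
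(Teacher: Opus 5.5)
Your final route is essentially the paper's proof: one all-distinct demand, the cut $H(W_{\mathcal D}\mid Z_{\mathcal B})\le\sum_{k\in\mathcal G}H(X_k^{\bd})$, a Han-type averaging bound $I(W_{\mathcal D};Z_{\mathcal B})\le\frac sN H(Z_{\mathcal B})$ transferred to the actual requested set by file-permutation symmetry, and incidence counting over the $K$ cyclic shifts. The paper phrases the averaging through conditional entropies over all $s$-subsets and uses the same $\bd$ for every block, so it needs no shift alignment of the broadcast entropies.

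One small correction to your sketch of the key step. In the ordering argument, the mutual-information contribution of the first $s$ files is at most (not at least) their $s/N$ share. What drives this is that $I(W_n;Z_{\mathcal B}\mid W_S)$ is nondecreasing in $S$ because the files are independent, not merely that these terms are equalized over $n$.
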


The \(\delta_s\) external senders carry the information missing from a block
of \(s\) user caches.  Appendix~\ref{app:tifs-ns-converse} proves the bound
using file-index symmetry and cyclic incidence counting.
\begin{corollary}[Exact high-memory interval]
\label{cor:tifs-ns-high}
If \(K\ge h+1\), then for
\(M\in[hN/(h+1),N]\),
\[
R^\star(M,r)=\frac{K}{h}\left(1-\frac{M}{N}\right).
\]
\end{corollary}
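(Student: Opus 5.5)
The proof needs an achievability part and a converse part on the interval \([hN/(h+1),N]\).

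\emph{Achievability.} I would use the MAN-lift points in \eqref{eq:tifs-bb-points}. At \(t=h\) the point is
\[
(M_h^{\rm bb},R_h^{\rm bb})=\bigl(hN/(h+1),\,K/((h+1)h)\bigr).
\]
The other endpoint is the full-cache point \((N,0)\). The segment joining these two points is the line
\[
R=\frac{K}{h}\Bigl(1-\frac{M}{N}\Bigr).
\]
This is checked by substitution: at \(M=hN/(h+1)\) the line gives
\[
\frac{K}{h}\cdot\frac{1}{h+1}=R_h^{\rm bb},
\]
and at \(M=N\) it gives \(0\). The secure and non-secure regions are both convex by independent concatenation, as stated in Section~\ref{sec:system-model}. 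Memory sharing therefore achieves this line on the whole interval, which gives \(R^\star(M,r)\le \frac Kh(1-M/N)\).

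\emph{Converse.} I would invoke Theorem~\ref{thm:tifs-ns-converse} with the block size \(s=1\). The hypothesis \(K\ge h+1\) gives \(\delta_1=\min\{h,K-1\}=h\), so the bound reads
\[
R^\star(M,r)\ge \frac{K}{h}\Bigl[1-\frac{M}{N}\Bigr]^+ .
\]
On \([hN/(h+1),N]\) the bracket is nonnegative, so this matches the achievable line exactly. Combining the two parts gives the stated equality.

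\emph{Main obstacle.} Theorem~\ref{thm:tifs-ns-converse} is stated under \(N\ge K\), but the corollary assumes only \(K\ge h+1\). So I would re-derive the \(s=1\) case directly for all \(N\), using a singleton cut in the style of the proof of Corollary~\ref{cor:tifs-pcc-endpoint}, but without the secrecy step.

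Fix a user \(i\) and take the \(N\) all-equal demand vectors \(\bd^{(n)}=(n,\ldots,n)\). Reliability shows that \(Z_i\) together with the \(h\) audible broadcasts under all \(N\) demands recovers \(W_{[N]}\). The user's own broadcast is a function of \(Z_i\), so it adds nothing. This gives
\[
NF\le MF+\sum_{n=1}^N\sum_{k\in\cG_i^{(1)}}H\bigl(X_k^{\bd^{(n)}}\bigr).
\]
Summing over the \(K\) users counts each transmitter exactly \(h\) times, and the peak-load constraint then gives \(K(N-M)F\le hNRF\). Rearranging yields \(R\ge \frac{K}{h}(1-M/N)\) for every \(N\) and every \(M\).

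This derivation does not need Lemma~\ref{lem:tifs-symmetrization}. Its only delicate point is the incidence count: each sender \(k\) lies in \(\cG_i^{(1)}\) for exactly the \(h\) users \(i\ne k\) within radius \(r\), and this uses \(K\ge h+1\). It also covers the case \(N<K\), so the corollary follows in full generality.
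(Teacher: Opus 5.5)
Your proposal is correct and matches the paper's proof. Achievability is memory sharing between the $t=h$ MAN-lift point $\bigl(hN/(h+1),K/(h(h+1))\bigr)$ and the full-cache point $(N,0)$, and the converse is the singleton cut over the $N$ all-equal demand vectors, with each transmitter counted in exactly $h$ boundary sets. You were also right to replace the appeal to Theorem~\ref{thm:tifs-ns-converse} with this direct argument, since it needs neither $N\ge K$ nor the symmetrization lemma.
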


\begin{theorem}[Factor-\(9/2\) guarantee]
\label{thm:tifs-ns-gap}
Under the same single-shot non-interactive sum-load model, assume
\(N\ge K\ge h+1\).  For every \(M\in[N/(h+1),N]\),
\[
R_{\rm bb}(M)\le \frac92 R^\star(M,r).
\]
\end{theorem}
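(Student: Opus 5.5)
The plan is a segment-by-segment comparison of the MAN-lift envelope \(R_{\rm bb}\) with a single fixed branch of the converse \eqref{eq:tifs-ns-converse}.

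\emph{Step 1: the envelope.} The points \eqref{eq:tifs-bb-points} have \(M_t\) affine in \(t\). In terms of \(t\), the load is \(R_t=\frac{K}{h+1}\bigl(\frac{h+1}{t}-1\bigr)\), which is convex in \(t\). Together with \((N,0)\), the points therefore lie on a convex piecewise-affine curve, so \(R_{\rm bb}\) interpolates linearly between consecutive points. The interval \([N/(h+1),N]\) splits into the segments \([M_t,M_{t+1}]\) for \(t=1,\ldots,h-1\) and the last segment \([hN/(h+1),N]\). On the last segment, Corollary~\ref{cor:tifs-ns-high} already gives ratio \(1\).

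\emph{Step 2: reduce to endpoints.} On a segment \([M_t,M_{t+1}]\) with \(t\le h-1\), I fix one block size \(s\in[h]\). Since \(s\le h\le K-1\), this \(s\) is admissible in \eqref{eq:tifs-ns-converse}. I choose \(s\) so that \(sM_{t+1}\le N\). Then the branch \(L_s(M)=\frac{K}{\delta_s}s(1-sM/N)\) is affine and positive on the whole segment, and \(\delta_s\le h\) gives \(L_s(M)\ge \frac{K}{h}s(1-sM/N)\). The upper bound is also affine there, so the ratio of the two is monotone on the segment. It therefore suffices to check \(R_t/L_s(M_t)\le 9/2\) and \(R_{t+1}/L_s(M_{t+1})\le 9/2\). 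Writing \(x=t/(h+1)\), these become explicit rational inequalities in \(t,h,s\), and \(K\) cancels.

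\emph{Step 3: choice of \(s\) and case split.}

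\emph{Large \(t\).} Take \(s=1\). The ratio at \(M_t\) is \(\frac{h(h+1-t)}{(h+1)t(1-t/(h+1))}\cdot\frac{1}{h}\cdot h=h/t\), and the ratio at \(M_{t+1}\) is \(h/(t+1)\). Both are at most \(9/2\) once \(t\ge 2h/9\).

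\emph{Small \(t\).} Take \(s=\lfloor (h+1)/(2(t+1))\rfloor\) or a neighbouring integer, which makes \(sM_{t+1}/N\approx 1/2\). The continuous relaxation then gives a ratio near \(4\cdot\frac{t+1}{t}\cdot\frac{t}{t+1}\)-type constants. I would optimise \(s\) more carefully, roughly \(s\approx (h+1)/(2t)\) tuned to the segment, and bound the worst endpoint.

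\emph{Small \(h\).} The residual small cases, where \(h/t\) is moderate and \(s\) is forced to be \(1\) or \(2\), would be checked directly from the explicit formulas.

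\emph{Main obstacle.} The hardest step is the integer rounding of \(s\) in the small-\(t\) regime. Choosing \(s\) slightly too large makes \(1-sM_{t+1}/N\) tiny, while choosing it too small loses a factor \(s\). The constant \(9/2\) presumably arises exactly from the worst rounding at the endpoint \(M_{t+1}\) when \((h+1)/(t+1)\) lies just below an integer threshold.

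I would first parametrise by \(y=(h+1)/(t+1)\) and prove \(\max_{s\in\mathbb Z_{\ge1}} s(1-s/y)\ge c\,y\) with an explicit \(c\), for instance \(c=2/9\) for \(y\ge 2\). I would then combine this with \(R_{t+1}\le K(h+1)/((h+1)(t+1))\cdot\frac{h+1}{h}\)-type bounds, and handle the opposite endpoint \(M_t\) by the same function with \(y=(h+1)/t\).
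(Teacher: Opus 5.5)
The gap is in the small-\(t\) regime, and that is where the whole difficulty lies. Your Steps 1--2 and the large-\(t\) case are sound. On a segment the envelope and a single branch \(L_s\) are both affine, so their ratio is monotone (you need \(sM_{t+1}<N\) strictly), and \(s=1\) gives ratio \(h/t\). But the choices of \(s\) you propose for small \(t\) fail. Put \(A=h+1\). Using \(\delta_s\le h\), the two endpoint ratios on segment \(t\) are
\[
\rho_L=\frac{(A-1)(A-t)}{st\,(A-st)}=\frac{(1-\frac1A)(1-\frac tA)}{a(1-a)},
\]
\[
\rho_R=\frac{(A-1)(A-1-t)}{s(t+1)(A-s(t+1))}=\frac{(1-\frac1A)(1-\frac{t+1}A)}{b(1-b)},
\]
with \(a=st/A\) and \(b=s(t+1)/A\).

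A fixed \(s\) must therefore place \(a\) and \(b\) symmetrically about \(\frac12\), i.e.\ \(s\approx A/(2t+1)\). The ratio then tends, as \(h\to\infty\) with \(t\) fixed, to \((2t+1)^2/(t(t+1))=4+\frac1{t(t+1)}\), which is exactly \(9/2\) at \(t=1\). So \(9/2\) is not produced by rounding, as you conjecture. It is the continuous-limit constant on the first segment \([N/(h+1),2N/(h+1)]\), and only \(O(1/h)\) slack is available there.

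Your choice \(s\approx A/(2(t+1))\) gives a left-endpoint ratio tending to \(16/3\) at \(t=1\). For \(h=100\), \(s=25\) gives \(10^4/1900\approx5.26\), and \(s=26\) gives \(\approx5.13\). At \(t=1\) your alternative \(s\approx A/(2t)\) drives the branch to nearly zero at \(M_2\), so \(\rho_R\) grows like \(h\).

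The fallback lemma \(\max_s s(1-s/y)\ge\frac29y\), applied at each endpoint separately, does not close the gap either. It selects different \(s\) at the two ends, which voids the monotonicity reduction of Step~2. The maximum of the branches is convex, so the worst ratio can occur in the interior. In the continuous limit, against the pointwise-optimal branch, it occurs at \(x=t+\frac12\), where \(x=(h+1)M/N\). Moreover, its multiplicative loss of \(8/9\) relative to the optimum \(y/4\) cannot be afforded when the limit is already \(9/2\).

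To repair your route, on the first segment take \(s\) to be the nearest integer to \((h+1)/3\). Then verify \(2h^2\le9s(h+1-s)\) and \(h(h-1)\le9s(h+1-2s)\) exactly, case by case on the residue of \(h+1\) modulo \(3\). For \(2\le t<2h/9\), any integer \(s\in[A/(3t),2A/(3(t+1))]\) gives \(a,b\in[\frac13,\frac23]\), hence both ratios are below \(9/2\). Such an integer exists except at \((t,h)=(2,12)\), where \(s=2\) works directly.

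The paper does not fix a branch per segment. It takes \(s(x)\) pointwise as the nearest integer to \((h+1)/(2x)\) and completes the square. This gives \(R^\star(M,r)/K\ge\frac{(h+1)^2-x^2}{4xh(h+1)}\) with only an additive rounding loss. It then treats \(t=1\) by a direct quadratic positivity check, \(2\le t\le(h+1)/4\) by the bound \(\frac{25}{6}\cdot\frac{27}{25}\), and \(t\ge(h+1)/4\) by \(s=1\).
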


Appendix~\ref{app:tifs-ns-gap} proves the factor-\(9/2\) guarantee by comparing
the MAN-lift memory-sharing load with the block-boundary converse.
\begin{IEEEproof}[Proof of Corollary~\ref{cor:tifs-ns-high}]
Memory sharing between the last nontrivial MAN-lift point
\[
\left(\frac{hN}{h+1},\frac{K}{h(h+1)}\right)
\]
and the full-cache point \((N,0)\) gives the upper bound.  For the lower bound,
use the \(N\) all-equal demand vectors
\(\bd^{(n)}=(n,\ldots,n)\), \(n\in[N]\).  For any user \(i\), reliability over
these demands and the fact that its own transmission is a function of its
cache give
\[
H\!\left(W_{[N]}\given
Z_i,\{X_{\cG_i^{(1)}}^{\bd^{(n)}}:n\in[N]\}\right)=0.
\]
Since \(|\cG_i^{(1)}|=h\) for \(K\ge h+1\),
\[
(N-M)F\le
\sum_{n=1}^N\sum_{k\in\cG_i^{(1)}}H(X_k^{\bd^{(n)}}).
\]
Every transmitter belongs to exactly \(h\) singleton boundary sets.
Summing the preceding inequality over all users therefore yields
\[
K(N-M)F\le h\sum_{n=1}^N\sum_{k=1}^K
H(X_k^{\bd^{(n)}})\le hNRF.
\]
Therefore
\[
R^\star(M,r)\ge \frac{K}{h}\left(1-\frac{M}{N}\right),
\]
matching the memory-sharing load throughout the stated interval.
\end{IEEEproof}
\FloatBarrier

\section{Conclusion}

Our construction turns secure shared-link caching schemes into
radius-limited D2D schemes. The cache budget includes each sender's library
representation, retained random state, and delivery keys. The secrecy proof
handles the dependence between senders' shares and shows that each user's
cache and audible broadcasts jointly reveal no information about
nonrequested files. The converse counts broadcasts crossing a user block
and uses one user's secrecy constraint to remove the broadcasts already
included in its observation.

With \(h=2r\) and \(K\ge h+1\), the PCC lift attains the minimum load
\(K/h\) for every \(M\ge Nh\) when \(N>1\). For \(N\ge K\), its
memory-sharing envelope is within a factor \(15\) of the optimum on
\([N/h+2,Nh]\).
The same factor-\(15\) guarantee holds for every reception radius.

\appendices
\section{Proof of Theorem~\ref{thm:tifs-secure-converse}}
\label{app:tifs-secure-converse}
By Lemma~\ref{lem:tifs-symmetrization}, take a cyclic-user and file-index
symmetric scheme with the same memory--load pair.  Fix
\(s\in[K-1]\).  For each cyclic shift \(i\in[K]\), define the contiguous
block of \(s\) users by
\[
    \cB_i^{(s)}:=\{i,i+1,\ldots,i+s-1\}
\]
and let \(\cG_i^{(s)}\) be the set of outside transmitters heard by at least
one user in that block:
\[
\cG_i^{(s)}:=
\{k\notin\cB_i^{(s)}:\cN_r(k)\cap\cB_i^{(s)}\ne\emptyset\}.
\]
On the cycle,
\[
    |\cG_i^{(s)}|=\delta_s=\min\{h,K-s\}.
\]
Let \(q=q_s=\lfloor N/s\rfloor\).  We choose demands so that each block
recovers \(sq\) distinct files and the transmission entropies can be aligned
across cyclic shifts.  Start with \(q\) base demand vectors
\(\bd^{(1)},\ldots,\bd^{(q)}\) as follows.  In each vector all \(K\) users
request distinct files, and across the \(q\) vectors the users in
\(\cB_1^{(s)}\) request \(sq\) distinct files.  Assign these \(sq\le N\)
block requests first.  In each demand vector, assign the remaining \(K-s\)
users distinct files from the \(N-s\ge K-s\) indices unused by the block.

Let \(g_i\) be the cyclic user shift that maps \(\cB_1^{(s)}\) to
\(\cB_i^{(s)}\), and use
\(\widetilde{\bd}_i^{(j)}:=g_i\bd^{(j)}\) for block \(i\).  The users in
\(\cB_i^{(s)}\) then request \(sq\) distinct files across the \(q\) instances.
Let \(\mathcal W_i^{(s,q)}\) collect these \(sq\) random files.  Their
indices are fixed by the chosen demands, and entropy of the collection
means joint entropy.  Under both \(\widetilde{\bd}_i^{(j)}\) and
\(\bd^{(j)}\), all \(K\) users request distinct files.  Hence there is a
permutation \(\pi_{i,j}\) of the file indices satisfying
\(\pi_{i,j}(\widetilde{\bd}_i^{(j)})=\bd^{(j)}\).  File-index symmetry gives,
for every transmitter \(k\),
\begin{equation}
H\!\left(X_k^{\widetilde{\bd}_i^{(j)}}\right)
=H\!\left(X_k^{\bd^{(j)}}\right).
\label{eq:tifs-file-align-secure}
\end{equation}
With the block index \(i\) fixed, abbreviate
\(X_{\mathcal A}^{(j)}
:=X_{\mathcal A}^{\widetilde{\bd}_i^{(j)}}\).

For a block user \(u\in\cB_i^{(s)}\), let \(\cH_i^{(s)}(u)\) be the set
of boundary transmitters heard by \(u\):
\[
\cH_i^{(s)}(u):=\cG_i^{(s)}\cap \cN_r(u).
\]
The largest possible value of
\(|\cH_i^{(s)}(u)|\) over \(u\in\cB_i^{(s)}\) is
\[
\eta_s=\min\{K-s,\max\{h/2,h+1-s\}\}.
\]
To see this, number the block positions by \(j=0,\ldots,s-1\).  The number
of outside transmitters heard at position \(j\) is
\[
\min\{K-s,[r-j]^++[r-(s-1-j)]^+\}.
\]
The sum of the two positive parts is convex in \(j\), so its maximum is
attained at an endpoint, where it equals
\(r+[r+1-s]^+=\max\{h/2,h+1-s\}\).
Choose an anchor \(u_1^\star\in\cB_1^{(s)}\) that hears \(\eta_s\)
outside transmitters and set
\(u_i^\star=g_i u_1^\star\).  Fix one demand instance index \(q_0\in[q]\).
Let \(W^\star\) be the file requested by
\(u_i^\star\) in instance \(q_0\).  Let \(\overline{\mathcal W}\) collect
the remaining \(sq-1\) random files:
\[
\overline{\mathcal W}:=\mathcal W_i^{(s,q)}\setminus\{W^\star\}.
\]
Then \(H(\overline{\mathcal W}\given W^\star)=(sq-1)F\) because the files are
independent.

For each of the \(q\) demand instances, all users in \(\cB_i^{(s)}\) decode
their requested files from their caches, transmissions from users inside
\(\cB_i^{(s)}\), and transmissions from \(\cG_i^{(s)}\).  Since delivery is
non-interactive, all transmissions by users inside \(\cB_i^{(s)}\) are
functions of their caches and the demand.  Therefore the files in
\(\mathcal W_i^{(s,q)}\), and in particular \(\overline{\mathcal W}\), are
recoverable from
\[
Z_{\cB_i^{(s)}}\quad\text{and}\quad
\{X_{\cG_i^{(s)}}^{(j)}:j=1,\ldots,q\},
\]
where \(j\) indexes the demand instance.  Hence
\[
H\!\left(\overline{\mathcal W}\given
Z_{\cB_i^{(s)}},X_{\cG_i^{(s)}}^{(1)},\ldots,
X_{\cG_i^{(s)}}^{(q)},W^\star\right)=0.
\]
It follows that
\[
\begin{aligned}
(sq-1)F
&=I\!\left(\overline{\mathcal W};
Z_{\cB_i^{(s)}},X_{\cG_i^{(s)}}^{(1)},\ldots,X_{\cG_i^{(s)}}^{(q)}
\given W^\star\right).
\end{aligned}
\]

File secrecy lets us remove the anchor's observation when bounding the
information needed to recover \(\overline{\mathcal W}\).  To separate this
observation from the remaining caches and transmissions, define
\(A_i^\star\) as the anchor's cache and the boundary broadcasts it hears in
instance \(q_0\):
\[
A_i^\star:=\bigl(Z_{u_i^\star},X_{\cH_i^{(s)}(u_i^\star)}^{(q_0)}\bigr),
\]
and let \(B_i^\star\) collect the remaining observations: the other \(s-1\)
caches, all boundary transmissions in the \(q-1\) instances other than
\(q_0\), and, in instance \(q_0\), the distinct transmissions indexed by
\(\cG_i^{(s)}\setminus\cH_i^{(s)}(u_i^\star)\).
Each retained boundary transmission appears once in \(B_i^\star\).
Together, \(A_i^\star\) and \(B_i^\star\) contain every block cache and
boundary transmission used by the decoder.  Consequently,
\[
H\!\left(\overline{\mathcal W}\given A_i^\star,B_i^\star,W^\star\right)=0,
\]
and hence
\[
(sq-1)F
=
I\!\left(\overline{\mathcal W};A_i^\star,B_i^\star\given W^\star\right).
\]
Reliability makes \eqref{eq:tifs-file-sec} equivalent to its conditioned
form \eqref{eq:tifs-file-sec-conditioned}.  All files in
\(\overline{\mathcal W}\) are unrequested by \(u_i^\star\) in instance
\(q_0\).  Thus the anchor's complete local view is independent of
\(\overline{\mathcal W}\) conditioned on \(W^\star\).  Since
\(A_i^\star\) is part of that view, data processing gives
\[
I\!\left(\overline{\mathcal W};
A_i^\star
\given W^\star\right)=0.
\]
By the chain rule,
\[
\begin{aligned}
I\!\left(\overline{\mathcal W};A_i^\star,B_i^\star\given W^\star\right)
&=
I\!\left(\overline{\mathcal W};A_i^\star\given W^\star\right) \\
&\quad +
I\!\left(\overline{\mathcal W};B_i^\star\given A_i^\star,W^\star\right) \\
&=
I\!\left(\overline{\mathcal W};B_i^\star\given A_i^\star,W^\star\right) \\
&\le H(B_i^\star).
\end{aligned}
\]
Subadditivity and \(H(Z_u)\le MF\) give
\[
\begin{aligned}
(sq-1)F
&\le (s-1)MF
 +\sum_{\substack{j\in[q]\\j\ne q_0}}
    \sum_{k\in\cG_i^{(s)}}
    H\!\left(X_k^{\widetilde{\bd}_i^{(j)}}\right)\\
&\quad+
   \sum_{k\in\cG_i^{(s)}\setminus
   \cH_i^{(s)}(u_i^\star)}
   H\!\left(X_k^{\widetilde{\bd}_i^{(q_0)}}\right).
\end{aligned}
\]

Sum over \(i\) and apply \eqref{eq:tifs-file-align-secure}.  For each fixed
\(j\ne q_0\), every transmitter appears in exactly \(\delta_s\) cyclic
shifts of the boundary set, giving
\[
\sum_{i=1}^K\sum_{k\in\cG_i^{(s)}}
H\!\left(X_k^{\widetilde{\bd}_i^{(j)}}\right)
=\delta_s\sum_{k=1}^K H(X_k^{\bd^{(j)}}).
\]
In the anchor instance \(q_0\), the sets
\(\cH_i^{(s)}(u_i^\star)\) are cyclic shifts of one set of cardinality
\(\eta_s\).  Thus each transmitter is omitted in exactly \(\eta_s\) shifts,
and
\[
\sum_{i=1}^K\sum_{k\in
\cG_i^{(s)}\setminus\cH_i^{(s)}(u_i^\star)}
H\!\left(X_k^{\widetilde{\bd}_i^{(q_0)}}\right)
=(\delta_s-\eta_s)\sum_{k=1}^K H(X_k^{\bd^{(q_0)}}).
\]
For each fixed base demand, the sum of all transmission entropies is at most \(RF\).  Hence the total
entropy of the retained boundary transmissions over all shifts is at most
\((q\delta_s-\eta_s)RF\), and therefore
\[
K(sq-1)F\le
K(s-1)MF+
(q\delta_s-\eta_s)RF.
\]
Dividing by \(K\) gives
\[
(sq-1)F\le
(s-1)MF+
(q\delta_s-\eta_s)\frac{RF}{K}.
\]
If \(q\delta_s>\eta_s\), rearranging and using \(R\ge0\) gives the
positive-part bound \eqref{eq:tifs-secure-converse}.  If
\(q\delta_s=\eta_s\), the transmission term vanishes, giving
\((sq-1)\le(s-1)M\).

For comparison with the PCC load, the next lemma gives a simpler form of
the secure converse for small user blocks.
\begin{lemma}[Small-block secure lower bound]
\label{lem:supp-small-block-lb}
Assume \(N\ge K\ge h+1\).  For any integer \(1\le s<h/2\), put
\(q=\lfloor N/s\rfloor\) and define the cache-dependent quantity
\(\Gamma_s\) by
\[
\Gamma_s:=\frac{(s-1)M+1}{N-s}.
\]
Whenever \(\Gamma_s<1\),
\begin{equation}
R_{\rm sec}^\star(M,r)
\ge
\frac{Ks}{h}\left(1-\frac{(s-1)M+1}{N-s}\right).
\label{eq:supp-small-block-lb}
\end{equation}
\end{lemma}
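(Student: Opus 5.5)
The plan is to specialize Theorem~\ref{thm:tifs-secure-converse} to the given block size \(s\) and then weaken the resulting bound into the stated closed form. Fix an integer \(1\le s<h/2\) and put \(q=q_s=\lfloor N/s\rfloor\). Since \(N\ge K>s\), we have \(q\ge1\). The only facts about the boundary counts I will need are \(\delta_s\le h\) and \(\eta_s\le\delta_s\). The second holds because the anchor hears a subset of \(\cG_i^{(s)}\). I will not need the exact value of \(\eta_s\); dropping it costs only a larger denominator, which still gives a valid lower bound.

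The core step is a comparison of numerators. I would first show that
\[
\frac{(sq-1)-(s-1)M}{q}\;\ge\; s\bigl(1-\Gamma_s\bigr).
\]
The left side equals \(s-\bigl((s-1)M+1\bigr)/q\). The inequality therefore reduces to \(\bigl((s-1)M+1\bigr)/q\le s\bigl((s-1)M+1\bigr)/(N-s)\). This holds because \((s-1)M+1>0\) and \(q=\lfloor N/s\rfloor> N/s-1=(N-s)/s\). The hypothesis \(\Gamma_s<1\) then makes the right side strictly positive. Hence \((sq-1)-(s-1)M>0\), so the positive part in \eqref{eq:tifs-secure-converse} is inactive.

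Next I would treat the two cases of the theorem.

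\emph{Case \(q\delta_s>\eta_s\).} The denominator satisfies \(0<q\delta_s-\eta_s\le qh\). So \eqref{eq:tifs-secure-converse} gives
\[
R_{\rm sec}^\star(M,r)\ge \frac{K\bigl[(sq-1)-(s-1)M\bigr]}{qh}\ge \frac{Ks}{h}\bigl(1-\Gamma_s\bigr),
\]
which is \eqref{eq:supp-small-block-lb}.

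\emph{Case \(q\delta_s=\eta_s\).} The theorem says every feasible scheme satisfies \(sq-1\le(s-1)M\). This contradicts the strict positivity just established, so no feasible scheme exists at this \(M\). The infimum defining \(R_{\rm sec}^\star\) is then \(+\infty\), and the bound holds trivially.

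\emph{Case \(q\delta_s<\eta_s\).} This cannot occur, since \(q\ge1\) and \(\eta_s\le\delta_s\).

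The only delicate point is the floor comparison \(q>(N-s)/s\). It has to go in the direction that lets the \(1/q\) loss be absorbed into the \(1/(N-s)\) term, and it does. The restriction \(s<h/2\) plays no role in this derivation beyond fixing the regime where the lemma will later be applied. I would note this, and not try to use the exact form of \(\eta_s\).
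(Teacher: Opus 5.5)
Your proposal is correct and follows essentially the same route as the paper. Both proofs specialize Theorem~\ref{thm:tifs-secure-converse}, bound the denominator by $qh$, and use $sq>N-s$ to replace $1/(sq)$ by $1/(N-s)$. The one difference is that the paper uses $s<h/2$ to get $\eta_s=h+1-s$ and $q\ge2$ (since $2s<h<N$), which rules out the degenerate case $q\delta_s=\eta_s$ directly; your infeasibility argument does not need these facts, so it actually gives the bound for every $s\in[K-1]$.
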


\begin{IEEEproof}
Because \(1\le s<h/2\) and \(K\ge h+1\),
\[
\eta_s=h+1-s,
\qquad
\delta_s=\min\{h,K-s\}\ge h+1-s=\eta_s.
\]
Since \(q\ge2\), this block size satisfies \(q\delta_s>\eta_s\).  The
converse denominator obeys
\[
q\delta_s-\eta_s\le qh-\eta_s<qh.
\]
Let \(A=(s-1)M+1\).  Since
\(sq>N-s\) and \(\Gamma_s=A/(N-s)<1\), we have \(A<sq\), so the converse
numerator \(sq-A\) is positive.  Theorem~\ref{thm:tifs-secure-converse} then
gives
\begin{align*}
R_{\rm sec}^\star(M,r)
&\ge \frac{K(sq-A)}{qh}
 =\frac{Ks}{h}\left(1-\frac{A}{sq}\right)\\
&\ge \frac{Ks}{h}\left(1-\frac{A}{N-s}\right),
\end{align*}
where the last inequality uses \(sq>N-s\) and \(A\ge0\).
\end{IEEEproof}

\section{Proof of Theorem~\ref{thm:tifs-pcc-gap}}
\label{app:tifs-pcc-gap}
We compare the PCC load with the secure converse on each interval between
adjacent PCC memory points.  Under \(N\ge K\ge h+1\), these coordinates are strictly
increasing.  For \(u=1,\ldots,h-2\),
\[
M_{u+1}^{{\rm PCC}\to r}-M_u^{{\rm PCC}\to r}
=
\frac{N(h+1)}{(h-u)(h-u+1)}
-\frac{1}{u(u+1)}
>0,
\]
and the final difference satisfies
\[
M_h^{{\rm PCC}\to r}-M_{h-1}^{{\rm PCC}\to r}
=\frac{N(h+1)}2-1-\frac1{h-1}>0.
\]
At \(M=M_h^{{\rm PCC}\to r}\), Corollary~\ref{cor:tifs-pcc-endpoint}
gives \(R_{{\rm PCC}\to r}^{\rm up}(M)=R_{\rm sec}^\star(M,r)=K/h\).
Every
\(M\in[M_1^{{\rm PCC}\to r},M_h^{{\rm PCC}\to r})\) lies in a segment
\[
M\in[M_u^{{\rm PCC}\to r},M_{u+1}^{{\rm PCC}\to r}]
\]
for some \(u\in\{1,\ldots,h-1\}\).  Memory sharing gives
\begin{equation}
R_{{\rm PCC}\to r}^{\rm up}(M)\le \frac Ku.
\label{eq:tifs-pcc-up-proof}
\end{equation}

If \(u\ge h/15\), then the \(s=1\) secure converse gives
\(R_{\rm sec}^\star(M,r)\ge K/h\), and therefore
\[
\frac{R_{{\rm PCC}\to r}^{\rm up}(M)}
{R_{\rm sec}^\star(M,r)}
\le \frac hu\le 15.
\]
For the remaining integers \(1\le u<h/15\), the evenness of \(h\) implies
\(h\ge16\).  Put \(v=u+1\) and \(H=h-u=h-v+1\).  The segment gives
\[
M\le M_{u+1}^{{\rm PCC}\to r}=M_v^{{\rm PCC}\to r}
= \frac{Nv}{H}+1+\frac1v.
\]
Write \(M_v=M_v^{{\rm PCC}\to r}\).  For an integer \(1\le s<h/2\), define
\[
\Gamma:=\frac{(s-1)M+1}{N-s}.
\]
Whenever \(\Gamma<1\), Lemma~\ref{lem:supp-small-block-lb} yields
\begin{equation}
R_{\rm sec}^\star(M,r)
\ge \frac{Ks}{h}(1-\Gamma).
\label{eq:pcc-gap-basic-15}
\end{equation}

\emph{Case \(3\le u<h/15\).}  Choose
\[
x:=\frac{H}{3v},\qquad s:=\lfloor x\rfloor.
\]
Then \(x>3\), so
\[
s\ge x-1=\frac{h-4u-3}{3(u+1)},
\qquad s\le x<h/2.
\]
Since \(s-1\le x-1\), \(N-s\ge N-x>0\), and \(M\le M_v\),
\[
\Gamma
\le
\frac{(x-1)M_v+1}{N-x}.
\]
Expanding the numerator gives
\begin{align*}
\frac{(x-1)M_v+1}{N}
&=
\left(\frac13-\frac{u+1}{h-u}\right)
\\
&\quad+
\frac{(h-4u-3)(u+2)}{3(u+1)^2N}
\,+\,\frac1N \\
&\le \frac13+\frac5{48}
=\frac7{16}.
\end{align*}
The last inequality uses \(H<N\), \(v\ge4\), and the identity
\[
\begin{aligned}
-\frac{v}{H}
+\frac{(H-3v)(v+1)}{3v^2N}+\frac1N
&=\frac{H}{N}\frac{v+1}{3v^2}-\frac{v}{H}-\frac1{vN}\\
&\le\frac{v+1}{3v^2}\le\frac5{48}.
\end{aligned}
\]
Furthermore,
\[
\frac{x}{N}\le \frac{1}{3(u+1)}\le \frac1{12},
\]
so
\[
\Gamma\le \frac{7/16}{1-1/12}=\frac{21}{44},
\qquad
1-\Gamma\ge\frac{23}{44}.
\]
Substituting this estimate into \eqref{eq:pcc-gap-basic-15} gives
\[
R_{\rm sec}^\star(M,r)
\ge
\frac{K}{h}\cdot
\frac{h-4u-3}{3(u+1)}\cdot
\frac{23}{44}.
\]
Combining this lower bound with \eqref{eq:tifs-pcc-up-proof} gives
\[
\frac{R_{{\rm PCC}\to r}^{\rm up}(M)}{R_{\rm sec}^\star(M,r)}
\le
\frac{132h(u+1)}{23u(h-4u-3)}.
\]
For fixed \(u\), the right side decreases with \(h\).  The constraint
\(u<h/15\) gives \(h\ge15u+1\), so the ratio is at most
\[
\frac{132(15u+1)(u+1)}{23u(11u-2)}<15.
\]
The strict inequality is equivalent to
\[
1815u^2-2802u-132>0,
\]
which holds for every \(u\ge3\): the polynomial is positive at \(u=3\) and
strictly increases thereafter.

\emph{Case \(u=2\).}  The condition \(h>30\) and the evenness of \(h\) give
\(h\ge32\).  Choose
\[
x:=\frac{h-2}{9},\qquad s:=\lfloor x\rfloor.
\]
Then \(s\ge x-1=(h-11)/9\), \(s<h/2\), and
\[
M\le M_3^{{\rm PCC}\to r}=\frac{3N}{h-2}+\frac43.
\]
The inequalities \(s-1\le x-1\) and \(N-s\ge N-x>0\) give
\[
\Gamma\le \frac{(x-1)M_3+1}{N-x}.
\]
For fixed \(h\), define the positive coefficients \(\alpha\) and \(\beta\) by
\[
\alpha=\frac{h-11}{3(h-2)}>0,
\qquad \beta=\frac{4h-17}{27}>0.
\]
The upper bound on \(\Gamma\) is then \((\alpha N+\beta)/(N-x)\).
This rational function has derivative
\(-(\beta+\alpha x)/(N-x)^2<0\) with respect to \(N\).
It is therefore largest under \(N\ge h+1\) at \(N=h+1\), giving
\[
\Gamma\le
\frac{13h^2-115h-65}{3(h-2)(8h+11)},
\]
and hence
\[
1-\Gamma\ge
\frac{11h^2+100h-1}{3(h-2)(8h+11)}>0.
\]
Applying \eqref{eq:pcc-gap-basic-15} yields
\[
R_{\rm sec}^\star(M,r)
\ge
\frac{K}{h}\cdot\frac{h-11}{9}\cdot
\frac{11h^2+100h-1}{3(h-2)(8h+11)}.
\]
Consequently,
\[
\frac{R_{{\rm PCC}\to r}^{\rm up}(M)}{R_{\rm sec}^\star(M,r)}
\le
\frac{27h(h-2)(8h+11)}{2(h-11)(11h^2+100h-1)}.
\]
For every \(h\ge32\), this expression is below \(15\).  Clearing its positive
denominator gives
\[
38h^3-165h^2-10812h+110>0,
\]
whose left side is positive at \(h=32\) and has positive derivative throughout
\(h\ge32\).

\emph{Case \(u=1\).}  The first segment gives
\[
M\le M_2^{{\rm PCC}\to r}=\frac{2N}{h-1}+\frac32,
\qquad
R_{{\rm PCC}\to r}^{\rm up}(M)\le K.
\]
It remains to show \(R_{\rm sec}^\star(M,r)\ge K/15\) on this segment.
Choose \(s=\lfloor h/6\rfloor\).  Since \(h\ge16\), we have \(s\ge2\) and
\(s<h/2\).  Let \(q=\lfloor N/s\rfloor\).  The relation
\(N\le s(q+1)-1\) gives
\[
M\le \frac{2(s(q+1)-1)}{h-1}+\frac32.
\]
It suffices to prove
\begin{equation}
\begin{aligned}
&15\left[sq-1-(s-1)\left(\frac{2(s(q+1)-1)}{h-1}
+\frac32\right)\right] \\
&\hspace{6em}\ge qh-(h+1-s).
\end{aligned}
\label{eq:u1-floor-check}
\end{equation}
Let \(D(q)\) denote the left side of \eqref{eq:u1-floor-check} minus its
right side.  The function \(D(q)\) is affine in \(q\).  Denote its
coefficient of \(q\) by \(C\):
\[
C=15s-\frac{30s(s-1)}{h-1}-h.
\]
Write \(h=6a+b\), with \(a=\lfloor h/6\rfloor=s\) and
\(b\in\{0,2,4\}\).  Since \(h\ge16\), the admissible ranges are
\(a\ge3\) for \(b=0,2\) and \(a\ge2\) for \(b=4\).  Then
\[
C(h-1)=
\begin{cases}
24a^2+21a, & b=0,\\
24a^2+27a-2, & b=2,\\
24a^2+33a-12, & b=4,
\end{cases}.
\]
Each polynomial is positive on its admissible range, so \(C>0\).
Since \(N\ge h+1\) gives \(q\ge6\),
\(D(q)\ge D(6)\).  Direct expansion gives
\[
2(h-1)D(6)=
\begin{cases}
18a^2+509a-77, & b=0,\\
18a^2+535a-63, & b=2,\\
18a^2+561a-129, & b=4,
\end{cases}.
\]
Each polynomial is positive on its admissible range, proving
\eqref{eq:u1-floor-check}.  Since \(qh-(h+1-s)>0\), this inequality and
the upper bound on \(M\) give
\[
15\bigl[sq-1-(s-1)M\bigr]\ge qh-(h+1-s)>0.
\]
Here \(\eta_s=h+1-s\) and \(\eta_s\le\delta_s\le h\), so the secure
converse yields
\[
\begin{aligned}
R_{\rm sec}^\star(M,r)
&\ge\frac{K\bigl[sq-1-(s-1)M\bigr]}{q\delta_s-\eta_s}\\
&\ge\frac{K\bigl[sq-1-(s-1)M\bigr]}{qh-(h+1-s)}
\ge\frac K{15}.
\end{aligned}
\]
Together with \(R_{{\rm PCC}\to r}^{\rm up}(M)\le K\), this completes the
\(u=1\) case and the proof.

\section{Worked Non-Secure MAN-Lift Example}
\label{app:tifs-ns-example}

\begin{example}[A radius-limited non-secure MAN lift]
Take \(K=6\), \(r=2\), \(h=4\), and \(N=3\).  For each file \(W_n\), form
six MDS shares \(Y_{n,1},\ldots,Y_{n,6}\), each of size \(F/5\), such that
any five shares recover the file.  Sender \(k\)'s local receiver set is
\(\mathcal K_k=\cN_r(k)\setminus\{k\}\).  Use a four-receiver MAN kernel
with parameter \(\tau=1\).  Split every local share \(Y_{n,k}\) into four
packets \(Y_{n,k}^{u}\), indexed by \(u\in\mathcal K_k\).
Receiver \(u\) caches the packet indexed by \(u\) for every file and
neighboring sender.  Each user
also caches its \(N\) complete home shares \(\{Y_{n,u}:n\in[N]\}\).

For demand \(\bd=(1,2,3,1,2,3)\), sender \(k\) transmits, for every pair
\(\{u,v\}\subseteq\mathcal K_k\), the finite-field sum
\[
X_{k;uv}=Y_{d_u,k}^{v}+Y_{d_v,k}^{u}.
\]
User \(1\), for example, hears senders \(2,3,5,6\).  In sender \(2\)'s
instance, the sums for pairs \(\{6,1\}\), \(\{1,3\}\), and
\(\{1,4\}\) reveal \(Y_{1,2}^{6}\), \(Y_{1,2}^{3}\), and
\(Y_{1,2}^{4}\), because user \(1\) has the companion packets indexed by
\(1\).
The three decoded packets and the cached packet \(Y_{1,2}^{1}\) recover
the local MDS share \(Y_{1,2}\).  Repeating the same step for senders
\(3,5,6\) gives user \(1\) all four non-home MDS shares.  Its home share
\(Y_{1,1}\) supplies the fifth share, so user \(1\) reconstructs \(W_1\).
Every user applies the same local
subtraction and MDS decoding steps.

Each packet has size \(F/20\).  A user stores \(3F/5\) bits of home shares
and \(4\cdot3F/20=3F/5\) bits from its four neighboring instances, for a
total of \(6F/5\) bits.  Each sender transmits \(\binom{4}{2}=6\) packets, and
the six senders transmit \(36F/20=9F/5\) bits in total.  These packet counts
give the achievable peak-load budget below, with local parameters
\(M_0=N/4\) and \(R_0=3/2\):
\[
(M,R)=\left(\frac65,\frac95\right).
\]
\end{example}

\section{Proof of Theorem~\ref{thm:tifs-ns-converse}}
\label{app:tifs-ns-converse}

Apply Lemma~\ref{lem:tifs-symmetrization} and fix a scheme with cyclic-user
and file-index symmetry at the same memory--load pair.
Fix \(s\in[K-1]\).  For each cyclic shift \(i\), let
\(\cB_i^{(s)}\) be the contiguous block of \(s\) users and
\(\cG_i^{(s)}\) the set of outside transmitters heard by at least one user in
the block, as in Appendix~\ref{app:tifs-secure-converse}.  Then
\(|\cG_i^{(s)}|=\delta_s=\min\{h,K-s\}\).

Because \(N\ge K\), choose a fixed demand vector \(\bd\) under which all
\(K\) users request distinct files.  For every block \(\cB_i^{(s)}\),
let \(\cD_i^{(s)}=\{d_u:u\in\cB_i^{(s)}\}\subseteq[N]\) be its set of
\(s\) requested file indices.  The corresponding random file tuple is
\(W_{\cD_i^{(s)}}\).
Non-interactive delivery makes the transmissions sent by users inside
\(\cB_i^{(s)}\) functions of the block caches and the demand.  Correctness
therefore gives
\[
H(W_{\cD_i^{(s)}}\given Z_{\cB_i^{(s)}},X_{\cG_i^{(s)}}^{\bd})=0,
\]
and hence
\begin{align*}
H(W_{\cD_i^{(s)}}\given Z_{\cB_i^{(s)}})
&\le H(X_{\cG_i^{(s)}}^{\bd}\given Z_{\cB_i^{(s)}})\\
&\le\sum_{k\in\cG_i^{(s)}}H(X_k^{\bd}).
\end{align*}

We next derive a lower bound on the uncertainty in the requested files
using the uncertainty remaining in the whole library after observing the
block caches.
For this fixed block, write \(Z=Z_{\cB_i^{(s)}}\).  Let
\(\mathcal A\subseteq[N]\) be a set of \(s\) file indices, and let
\([0]=\emptyset\), so \(W_{[0]}\) is the empty tuple.  The chain rule
and conditioning give
\[
\begin{aligned}
H(W_{\mathcal A}\given Z)
&=\sum_{j\in\mathcal A}
 H(W_j\given Z,W_{\mathcal A\cap[j-1]})\\
&\ge\sum_{j\in\mathcal A}
 H(W_j\given Z,W_{[j-1]}).
\end{aligned}
\]
In the sum over all such \(\mathcal A\), each
index \(j\) occurs \(\binom{N-1}{s-1}\) times.  Dividing by \(\binom Ns\)
and applying the chain rule to \(W_{[N]}\) yields
\[
\frac{1}{\binom Ns}\sum_{|\mathcal A|=s}
H(W_{\mathcal A}\given Z)
\ge
\frac{s}{N}H(W_{[N]}\given Z).
\]
Since the block contains \(s\) caches,
\(H(Z_{\cB_i^{(s)}})\le sMF\), and thus
\[
H(W_{[N]}\given Z_{\cB_i^{(s)}})\ge NF-sMF.
\]
File-index symmetry makes \(H(W_{\mathcal A}\given Z)\) equal for all
\(s\)-element index sets \(\mathcal A\).  The average lower bound therefore
applies to \(\mathcal A=\cD_i^{(s)}\), yielding
\[
H(W_{\cD_i^{(s)}}\given Z_{\cB_i^{(s)}})
\ge s\left(1-\frac{sM}{N}\right)F.
\]
Conditional entropy is nonnegative, so the right-hand side can be replaced
by its positive part.

Across all cyclic shifts \(i=1,\ldots,K\), each transmitter appears in
exactly \(\delta_s\) boundary sets.  The same fixed demand \(\bd\) serves every
block, so incidence counting gives
\[
\delta_s\sum_{k=1}^K H(X_k^{\bd})
\ge
Ks\left(1-\frac{sM}{N}\right)^+F.
\]
The peak-load constraint \(\sum_k H(X_k^{\bd})\le RF\) then yields
\eqref{eq:tifs-ns-converse}.

\section{Proof of Theorem~\ref{thm:tifs-ns-gap}}
\label{app:tifs-ns-gap}

We compare the MAN-lift memory-sharing load with the converse throughout
the MAN-lift memory interval.  Let \(x=(h+1)M/N\), so that
\(x\in[1,h+1]\).  For each integer \(t=1,\ldots,h+1\), let \(b_t\)
be the load at the MAN-lift point \(x=t\), divided by \(K\).  Thus
\(b_t=1/t-1/(h+1)\), and \(t=h+1\) is the full-cache point.
The slopes between successive points are
\[
b_{t+1}-b_t=-\frac1{t(t+1)},
\]
which increase strictly with \(t\).  Thus every MAN-lift point lies on the
lower convex hull.  For \(x\in[t,t+1]\), \(t=1,\ldots,h\), let
\(u_t(x)\) denote the normalized load obtained by interpolation between
adjacent points.  Then
\[
\frac{R_{\rm bb}(M)}{K}
=
u_t(x):=
\frac{2t+1-x}{t(t+1)}-\frac1{h+1}.
\]
The indices \(1\le s\le h\) satisfy \(s\le K-1\), and
\(\delta_s\le h\).  Define the normalized lower bound \(v_h(x)\) by
restricting Theorem~\ref{thm:tifs-ns-converse} to these indices and replacing
\(\delta_s\) by \(h\).  This gives
\[
\frac{R^\star(M,r)}{K}
\ge
v_h(x):=
\max_{1\le s\le h}
\frac{s}{h}\left(1-\frac{sx}{h+1}\right)^+.
\]
To obtain an explicit lower bound on \(v_h(x)\), fix
\(x\in[1,h+1)\) and consider the quadratic function \(f\) of a real
variable \(z\), defined by
\[
f(z):=z\left(1-\frac{zx}{h+1}\right).
\]
Its maximizer is \(y=(h+1)/(2x)\), with maximum
\(f(y)=(h+1)/(4x)\).  Completing the square gives
\[
f(z)=f(y)-\frac{x}{h+1}(z-y)^2.
\]
Let \(s(x)\) be the nearest integer to \(y\), with ties rounded up:
\[
s(x)=\left\lfloor y+\frac12\right\rfloor
=\left\lfloor\frac{h+1}{2x}+\frac12\right\rfloor .
\]
Then \(1\le s(x)\le h\) and \(|s(x)-y|\le1/2\), so
\[
f(s(x))\ge \frac{h+1}{4x}-\frac{x}{4(h+1)}
=\frac{(h+1)^2-x^2}{4x(h+1)}.
\]
Therefore
\[
v_h(x)\ge
\frac{(h+1)^2-x^2}{4xh(h+1)}.
\]

\emph{Region I: \(x\in[1,2]\).}
Here \(t=1\).  Direct algebra reduces
\[
u_1(x)<\frac92\frac{(h+1)^2-x^2}{4xh(h+1)}
\]
to
\[
9((h+1)^2-x^2)-4xh((h+1)(3-x)-2)>0.
\]
Put \(A=h+1\ge3\) and denote the left side by \(Q_A(x)\).  Its quadratic
coefficient is \(4A(A-1)-9>0\), and its global minimum is
\[
\min_{x\in\mathbb R}Q_A(x)
=\frac{84A^3-229A^2+80A-16}{4A^2-4A-9}>0.
\]
To verify positivity, set \(z=A-3\ge0\).  The numerator becomes
\[
84z^3+527z^2+974z+431>0.
\]
The denominator is positive for \(A\ge3\), proving the bound in Region I.

\emph{Region II: \(2\le t\le (h+1)/4\) and \(x\in[t,t+1]\).}
The lower bound on \(v_h(x)\) gives
\[
\frac{u_t(x)}{v_h(x)}
\le
\frac{4x(2t+1-x)}{t(t+1)}
\frac{h(h+1)}{(h+1)^2-x^2}.
\]
The first factor is a concave quadratic in \(x\), maximized at
\(x=t+1/2\), and hence is at most
\[
\frac{4(t+1/2)^2}{t(t+1)}
=4+\frac{1}{t(t+1)}
\le \frac{25}{6},
\]
because \(t\ge2\).  For the second factor,
\(x\le t+1\le (h+1)/4+1\).  The conditions \(t\ge2\) and \(h=2r\)
give \(h\ge8\).  Direct algebra gives
\[
\frac{h(h+1)}{(h+1)^2-((h+1)/4+1)^2}<\frac{27}{25}.
\]
After clearing the positive denominator, the inequality becomes
\(5(h+1)^2+184(h+1)-432>0\), which holds throughout this region.  Hence the
product is below \((25/6)(27/25)=9/2\).

\emph{Region III: \(t\ge (h+1)/4\).}
Let \(w_1(x)\) denote the lower bound obtained by choosing \(s=1\)
in the definition of \(v_h(x)\).  Then
\[
v_h(x)\ge w_1(x):=\frac{1-x/(h+1)}{h}.
\]
For \(x\in[t,t+1]\) with \(x<h+1\),
\[
\frac{d}{dx}\frac{u_t(x)}{w_1(x)}
=-\frac{h(h+1-t)(h-t)}{t(t+1)(h+1-x)^2}\le0.
\]
The ratio is therefore largest at \(x=t\), giving
\[
\frac{u_t(x)}{v_h(x)}
\le
\frac{u_t(x)}{w_1(x)}
\le
\frac{(1/t)-1/(h+1)}{(1-t/(h+1))/h}
=\frac{h}{t}
\le \frac{4h}{h+1}<4.
\]
The three regions establish
\[
R_{\rm bb}(M)\le \frac92 R^\star(M,r)
\]
for \(M<N\).  At \(M=N\), caching the entire library gives
\(R_{\rm bb}(N)=R^\star(N,r)=0\), which proves the endpoint directly.

\end{document}